\newtheorem{myrule}{Rule}
\def\etal{\textit{et al.}}
\def\e2e{end-to-end}
\def\toolname{Y2U}
\def\yakindu{Yakindu}
\def\uppaal{UPPAAL}
\def\toolweb{\url{www.cs.iit.edu/~code/software/Y2U}}
\def\clockSet{C}
\def\clock{c}
\def\actionSet{A}
\def\action{a}
\def\varSet{V}
\def\var{v}
\def\invariant{I}
\def\guardSet{G}
\def\guard{g}
\def\stateSet{S}
\def\state{s}
\def\automata{U}
\def\automataSet{\mathcal{U}}
\def\tranSet{T}
\def\tran{t}
\def\tranPrioritySet{\Gamma}
\def\tranPriority{\gamma}
\def\statechartPriority{\rho}
\def\sysstateSet{\Theta}
\def\sysstate{\theta}
\def\statechart{Y}
\def\statechartSet{\mathcal{Y}}
\def\trace{\mathcal{T}}
\def\prifun{\Phi}
\def\step{\alpha}
\newcommand{\valueFun}[1]{\nu(#1)}
\newcommand{\indexValueFun}[2]{\nu_{#1}(#2)}
\def\event{\epsilon}
\def\inStateActionAssign{\Delta_{\mathtt{in}}}
\def\outStateActionAssign{\Delta_{\mathtt{out}}}
\def\outAction{\action_{\mathtt{out}}}
\def\inAction{\action_{\mathtt{in}}}
\def\tranAction{\action_{\mathtt{tran}}}
\def\eventVarSet{\varSet_{\event}}
\def\increaseStep{\mathtt{Inc}(\step)}
\def\channel{\mathtt{chan}}
\def\outTran{\tran^{\mathtt{out}}}
\def\outTranSet{\tranSet^{\mathtt{out}}}
\newcommand{\automataSetStep}[1]{\automataSet^{(#1)}} 
\def\y{\mathtt{y}}
\def\u{\mathtt{u}}
\def\timer{\tau}
\def\inState{\state^{\mathtt{in}}}
\def\outState{\state^{\mathtt{out}}}
\def\timerAutomata{\automata_{\timer}}
\def\timerAutomataSet{\automataSet_{\timer}}
\def\eventAutomata{\automata_{\event}}
\def\eventAutomataSet{\automataSet_{\event}}
\def\highPriTranSet{\tranSet^{\mathtt{hp}}}
\def\highPriTran{\tran^{\mathtt{hp}}}
\def\labelSet{\Sigma}
\def\labels{\sigma}
\renewcommand\footnotetextcopyrightpermission[1]{} % removes footnote with conference information in first column
\begin{document}

\title{Formalism for Supporting the Development of Verifiably Safe Medical Guidelines with Statecharts}
\subtitle{(Technical Report)}

\author{Chunhui Guo}
\email{cguo13@hawk.iit.edu}
\author{Zhicheng Fu}
\email{zfu11@hawk.iit.edu}
\affiliation{
	\institution{Illinois Institute of Technology}
%	\streetaddress{10 West 31st Street}
	\city{Chicago}
%	\state{Illinois}
	\postcode{60616}
}

\author{Zhenyu Zhang}
\email{zzhang4430@sdsu.edu}
\author{Shangping Ren}
\email{sren@sdsu.edu}
\affiliation{
	\institution{San Diego State University}
%	\streetaddress{5500 Campanile Drive}
	\city{San Diego}
%	\state{California}
	\postcode{92182}
}

\author{Lui Sha}
\email{lrs@illinois.edu}
\affiliation{
	\institution{University of Illinois at Urbana-Champaign}
%	\streetaddress{201 North Goodwin Avenue}
	\city{Urbana}
%	\state{Illinois}
	\postcode{61801}
}

\begin{abstract}
	Improving the effectiveness and safety of patient care is the ultimate objective for medical cyber-physical systems. Many medical best practice guidelines exist, but most of the existing guidelines in handbooks are difficult for medical staff to remember and apply clinically. Furthermore, although the guidelines have gone through clinical validations, validations by medical professionals alone do not provide guarantees for the safety of medical cyber-physical systems. Hence, formal verification is also needed. The paper presents the formal semantics for a framework that we developed to support the development of verifiably safe medical guidelines.

The framework allows computer scientists to work together with medical professionals to transform medical best practice guidelines into executable statechart models, \yakindu\ in particular, so that medical functionalities and properties can be quickly prototyped and validated. Existing formal verification technologies, \uppaal\ timed automata in particular, is integrated into the framework to provide formal verification capabilities to verify safety properties. However, some components used/built into the framework, such as the open-source \yakindu\ statecharts as well as the transformation rules from statecharts to timed automata, do not have built-in semantics. The ambiguity becomes unavoidable unless formal semantics is defined for the framework, which is what the paper is to present.

%\chnotes{
%timer: when source state is entered, activate the timer (on transition)
%
%event: can be triggered at any time (hence, the event automata do not need activation)
%
%both timer and event do not follow Rule 7 (syn), to simulate time passing and event trigger (independent with original statechart model)
%
%event \& timer: use additional bool variable to indicate whether event/timer is triggered
%}

\end{abstract}

\maketitle
\fancyfoot{}
\thispagestyle{empty}

\section{Introduction}
\label{sec:intro}
Medical best practice guidelines are systematically
developed statements that intend to assist clinicians in
making decisions about appropriate health care procedures in specific circumstances~\cite{NAP1626}.
They aim to improve the quality of patient care by
encouraging interventions of proven benefit and discouraging the use of
ineffective or potentially harmful interventions~\cite{Woolf1999BMJ}.
The library of the Guidelines International Network has 6,187 documents
from 76 countries and the National Guideline Clearinghouse in
the United States has 2,017 guideline summaries~\cite{GuidelineLib}. 
However, most of these existing guidelines
are lengthy and difficult for medical professionals to
remember and apply clinically.

On the other hand, a study shows that the patients' death rate can be significantly reduced by
computerizing medical best practice guidelines~\cite{Mckinley2011computer}.
Developing computerized disease and treatment models from medical best practice
handbooks needs close interactions with medical professionals.
In addition, to satisfy the safety requirements,
the derived models also need to be both clinically validated and formally verified.
Therefore, how to develop verifiably safe medical guideline models has been
a challenge to both medical professionals and computer scientists/engineers.

Our previous work~\cite{Guo2016ICCPS} proposed and implemented a framework to
support the development of verifiably safe medical guideline models.
The framework seamlessly integrates the interfaces with medical
professionals for clinical validations and computer scientists for formal
verifications, respectively. Fig.~\ref{fig:approach}
depicts the high level architecture of the framework.
To meet the safety requirements of medical guideline models, medical professionals
must be involved in the development loop to perform clinical validation.
The framework models medical best practice guidelines with statecharts
and use statecharts to interact with medical professionals for validating
safety properties. However,
for safety-critical medical guideline models, validation by medical professionals
alone is not adequate for ensuring safety, hence formal verification is
required. However, most statecharts, such as \yakindu~\cite{yakindu}, do
not provide formal verification capability.
The framework hence transforms medical guideline statecharts to timed automata 
so that safety properties can be formally verified.
If a safety property is not satisfied, the framework
also provides the capability to trace the failed paths from timed automata
back to statecharts and assists model developers in debugging and correcting the errors.

\begin{figure}[ht]
	\centering
	\includegraphics[width = 0.8\textwidth]{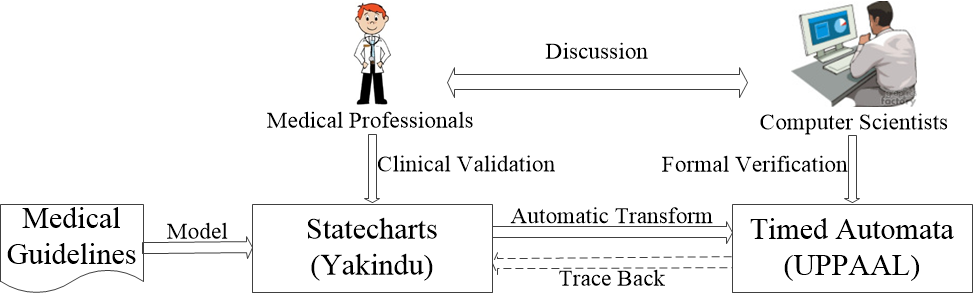}
	\caption{High Level Architecture of Framework}
	\label{fig:approach}
\end{figure}

As shown in Fig.~\ref{fig:approach}, the framework must maintain the model equivalence
between statecharts and transformed timed automata to ensure that
the formal verification results of timed automata also hold for corresponding statecharts.
To prove the model equivalence, all three components in the
framework, i.e., statecharts, timed automata, and the transformation from
statecharts to timed automata, require formal definitions. Behrmann \etal\
formally defined the syntax and semantics of \uppaal\ timed automata in~\cite{behrmann2004tutorial}.
However, the formal semantics of \yakindu\ statecharts and the transformation
in the framework are yet to be defined.

The paper presents the formalism of the framework.
In particular, we first define the statecharts execution
semantics, followed with the formal definitions of statecharts
to timed automata transformation rules. We then 
formally prove that the transformation rules maintain the execution semantic
equivalence between statecharts and transformed timed automata.

The paper is organized as follows.
In Section~\ref{sec:formalism}, we formalize the execution semantics of statecharts.
Section~\ref{sec:y2u} describes the formalized transformation rules
that transform statecharts to timed automata.
In Section~\ref{sec:correctness}, we formally prove the execution semantic
equivalence between statecharts and transformed timed automata.
A case study of a simplified cardiac arrest treatment scenario
is performed in Section~\ref{sec:exp}.
Section~\ref{sec:related} discusses related work.
We conclude in Section~\ref{sec:conclusion}.

\section{Execution Semantics of Basic \yakindu\ Statecharts}
\label{sec:formalism}
The \yakindu\ statecharts provide a set of essential elements
and a set of complex elements. The essential statechart elements
include \textit{states}, \textit{transitions},
\textit{guards}, \textit{actions}, and \textit{variables}.
Other complex elements can be implemented
by model patterns that are built upon the essential elements~\cite{Guo2017CBMS}.
Hence, we focus on the \yakindu\ statecharts only with the
essential elements. We call the statecharts with only the essential
elements \textit{basic \yakindu\ statechart}. We give the formal
definitions in Definition~\ref{def:st} and Definition~\ref{def:stMul}
and use Example~\ref{ex:stEx} to explain the formal definitions.
For readers' convenience, we list
all the notations used in this paper in Table~\ref{tab:TWC}.
For easy understanding, we apply the similar approach
used in \uppaal\ timed automata formalism~\cite{behrmann2004tutorial}
to formalize the syntax and semantics of basic \yakindu\ statecharts.

\begin{table}[ht]
	\caption{Notation Table} 
	\label{tab:TWC}
	\centering  
	\begin{tabular}{ | c | l |} \hline
		\textbf{Notation} & \textbf{Meaning} \\ \hline 
		$\action$/$\actionSet$ & action/action set \\ \hline
		$\step$ & execution index \\ \hline
		$\clock$/$\clockSet$ & clock/clock set \\ \hline
		$\channel$ & synchronization channel \\ \hline
		$\inStateActionAssign$/$\outStateActionAssign$ & state entry/exit action \\ \hline 		
		$\event$ & event \\ \hline
		$\guard$/$\guardSet$ & guard/guard set \\ \hline		
		$\tranPriority$/$\tranPrioritySet$ & transition priority/transition priority set \\ \hline
		$\invariant$ & clock invariant \\ \hline	
		$\statechartPriority$ & statechart priority \\ \hline		
		$\state$/$\stateSet$ & state/state set \\ \hline		 
		$\tran$/$\tranSet$ & transition/transition set \\ \hline	
		$\outTran$/$\outTranSet$ & outgoing transition/outgoing transition set \\ \hline	
		$\highPriTran$/$\highPriTranSet$ & higher priority transition/higher priority transition set \\ \hline		
		$\timer$ & timing trigger \\ \hline		 
		$\trace$ & execution trace \\ \hline
		$\sysstate$/$\sysstateSet$ & system status/system status set \\ \hline		
		$\var$/$\varSet$ & variable/variable set \\ \hline
		$\nu$ & valuation \\ \hline
		$\automata$/$\automataSet$ & timed automata/timed automata set \\ \hline
		$\eventAutomata$/$\eventAutomataSet$ & event automata/event automata set \\ \hline		
		$\timerAutomata$/$\timerAutomataSet$ & timing trigger automata/timing trigger automata set \\ \hline		
		$\statechart$/$\statechartSet$ & statechart/statechart set \\ \hline		 		
	\end{tabular} 
\end{table}

\begin{definition}[Basic \yakindu\ Statechart]
	\label{def:st}	
	A basic \yakindu\ statechart $\statechart$ is a tuple $(\stateSet,
	\state_0, \tranSet, \guardSet, \actionSet, \varSet, \tranPrioritySet, \inStateActionAssign, \outStateActionAssign)$,
	where $\stateSet$, $\guardSet$, $\actionSet$, $\varSet$, and $\tranPrioritySet$
	are a set of states, guards, actions, variables, and
	transition priorities, respectively;
	$\state_0 \in \stateSet$ is the initial state;	
	$\tranSet \subseteq \stateSet \times (\guardSet \times \actionSet \times \tranPrioritySet) \times \stateSet$
	is a set of transitions between states with a guard, an action,
	and a priority of corresponding transition;	
	$\inStateActionAssign \subseteq \stateSet \times \actionSet$
	assigns entry actions to states; and	
	$\outStateActionAssign \subseteq \stateSet \times \actionSet$
	assigns exit actions to states.	
	The transition priority
	$\tranPriority \in \tranPrioritySet$ is represented by a positive integer, the smaller
	the value of $\tranPriority$, the higher the priority of corresponding
	transition $\tran$.
\end{definition}

\begin{definition}[Network of Basic \yakindu\ Statecharts]
	\label{def:stMul}
	A network of basic \yakindu\ statecharts $\statechartSet$ with
	$n$ basic \yakindu\ statecharts is defined as
	$\statechartSet=\{ (\statechart_i, \statechartPriority_i) | 1 \le i \le n \}$,
	where $\statechart_i = (\stateSet_i, \state^0_i, \tranSet_i, \guardSet_i, \actionSet_i, \varSet_i, \tranPrioritySet_i, \inStateActionAssign^i, \outStateActionAssign^i)$ is a basic \yakindu\ statechart, and
	$\statechartPriority_i$ is a positive integer representing the statechart
	$\statechart_i$'s execution priority, the smaller the value of $\statechartPriority_i$,
	the higher the execution priority of $\statechart_i$. Furthermore,
	$\forall (\statechart_i,\statechartPriority_i), (\statechart_j,\statechartPriority_j) \in \statechartSet: \varSet_i = \varSet_j$, i.e., 
	all statecharts in $\statechartSet$ share the same variable set $\varSet$.
\end{definition}

\begin{example}
	\label{ex:stEx}
	We use the \yakindu\ statechart model shown in Fig.~\ref{fig:stEx}
	to explain Definition~\ref{def:st} and Definition~\ref{def:stMul}.
	The model contains two statecharts
	$\statechart_1$ and $\statechart_2$. The statechart $\statechart_1$
	has higher execution priority than the statechart $\statechart_2$,
	hence we assign $\statechartPriority_1$ and $\statechartPriority_2$
	to be 1 and 2, respectively. The model declares two variables:
	an integer variable $x$ and an event variable $\mathtt{eventA}$,
	hence the variable set is $\varSet_{\y} = \{ x, \mathtt{eventA} \}$.
	
	\begin{figure}[ht]
		\centering
		\includegraphics[width = 0.7\textwidth]{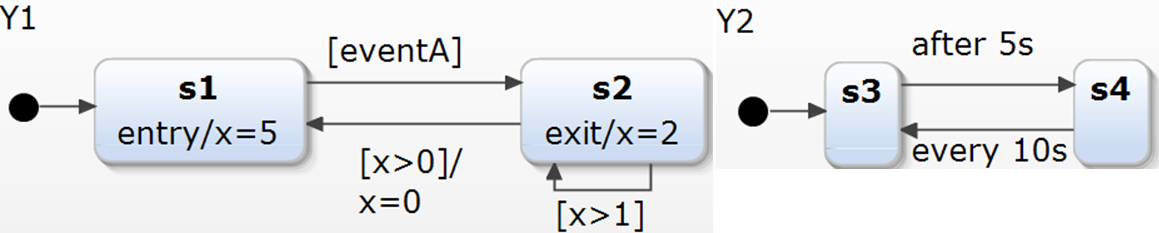}
		\caption{\yakindu\ Statechart Model}
		\label{fig:stEx}
	\end{figure}
	
	The statechart $\statechart_1$ contains three states:
	the initial state $\state_0^1$ (depicted as a filled circle in Fig.~\ref{fig:stEx}),
	state $\state_1$, and state $\state_2$, hence the state set of
	statechart $\statechart_1$ is $\stateSet_{\y}^1 = \{ \state_0^1, \state_1, \state_2 \}$.
	The states $\state_1$ and $\state_2$ have an entry action $x=5$
	and an exit action $x=2$, respectively, hence
	$\inStateActionAssign^1 = \{ (\state_1,x=5) \}$ and
	$\outStateActionAssign^1 = \{ (\state_2,x=2) \}$.
	The statechart $\statechart_1$ contains four transitions. 
	We use $\tran_{\y}^1$ to denote the transition from the initial state
	$\state_0^1$  to state $\state_1$. According to \yakindu\
	statechart's documentation~\cite{yakinduDoc}, the initial state only
	has a outgoing transition which is guarded by $\mathtt{true}$ and
	does not have any actions. We assign the transition priority
	of $\tran_{\y}^1$ as $\tranPriority_1 = 1$.
	Hence, the formal representation of transition $\tran_{\y}^1$ is
	$\tran_{\y}^1 = (\state_0^1, \mathtt{true}, \mathtt{NULL}, 1, \state_1)$.
	Similarly, we formally represent the other three transitions
	of statechart $\statechart_1$ as follows:
	$\tran_{\y}^2 = (\state_1, \mathtt{eventA}, \mathtt{NULL}, 1, \state_2)$,
	$\tran_{\y}^3 = (\state_2, x>0, x=0, 1, \state_1)$, and
	$\tran_{\y}^4 = (\state_2, x>1, \mathtt{NULL}, 2, \state_2)$.
	The state $\state_2$ has two outgoing transitions $\tran_{\y}^3$ and $\tran_{\y}^4$.
	The transition $\tran_{\y}^3$ has higher priority than transition $\tran_{\y}^4$,
	hence we assign the transition priorities as
	$\tranPriority_3 = 1$ and $\tranPriority_4 = 2$.
	The formal representation of statechart $\statechart_1$ is
	$\statechart_1 = (\stateSet_{\y}^1, \state_0^1, \tranSet_{\y}^1, \guardSet_{\y}^1, \actionSet_{\y}^1, \varSet_{\y}, \tranPrioritySet_1, \inStateActionAssign^1, \outStateActionAssign^1)$,
	where $\tranSet_{\y}^1 = \{ \tran_{\y}^1,\tran_{\y}^2,\tran_{\y}^3,\tran_{\y}^4 \}$,
	$\guardSet_{\y}^1 = \{ \mathtt{true}, \mathtt{eventA}, x>0, x>1 \}$,
	$\actionSet_{\y}^1 = \{ x=0, x=5, x=2 \}$, and
	$\tranPrioritySet_1 = \{ 1,2 \}$.
	
	Similarly, we formally represent the statechart $\statechart_2$ as
	$\statechart_2 = (\stateSet_{\y}^2, \state_0^2, \tranSet_{\y}^2, \guardSet_{\y}^2, \actionSet_{\y}^2, \varSet_{\y}, \tranPrioritySet_2, \emptyset, \emptyset)$,
	where
	$\stateSet_{\y}^2 = \{ \state_0^2, \state_3, \state_4 \}$,
	$\tranSet_{\y}^2 = \{ \tran_{\y}^5, \tran_{\y}^6, \tran_{\y}^7 \}$,
	$\tran_{\y}^5 = (\state_0^2, \mathtt{true}, \mathtt{NULL}, 1, \state_3)$,
	$\tran_{\y}^6 = (\state_3, \mathtt{after \ 5s}, \mathtt{NULL}, 1, \state_4)$,
	$\tran_{\y}^7 = (\state_4, \mathtt{every \ 10s}, \mathtt{NULL}, 1, \state_3)$,
	$\guardSet_{\y}^2 = \{ \mathtt{true}, \mathtt{after \ 5s},\mathtt{every \ 10s} \}$,
	$\actionSet_{\y}^2 = \emptyset$, and
	$\tranPrioritySet_2 = \{ 1 \}$.
	Therefore, the formal representation of the \yakindu\
	statechart model shown in Fig.~\ref{fig:stEx} is
	$\statechartSet=\{ (\statechart_1, 1), (\statechart_2, 2) \}$.
\end{example}

Before formalizing the execution semantics of basic \yakindu\ statecharts,
we give the definitions of \textit{valuation function}, \textit{system status}\footnote{The concept of system status is the same as system state. To avoid confusion with statechart state, we call it system status.},
and \textit{labeled transition system}, which are used to define execution semantics,
as follows. 

\begin{definition}[Valuation Function]
	\label{def:valFun}
	Given a set of variables $\varSet$,
	we define $\valueFun{\varSet}$ as a valuation function that maps
	the set of variables $\varSet$ to their corresponding values.
\end{definition}

\begin{definition}[System Status]
	\label{def:sysState}	
	Given a network of basic \yakindu\ statecharts
	$\statechartSet=\{ (\statechart_i, \statechartPriority_i) | \statechart_i = (\stateSet_i, \state^0_i, \tranSet_i, \guardSet_i, \actionSet_i,\\ \varSet, \tranPrioritySet_i, \inStateActionAssign^i, \outStateActionAssign^i) \wedge 1 \le i \le n \}$,
	the system status $\sysstate =(\overline{\state}, \valueFun{\varSet})$
	denotes the active states of each statechart $\statechart_i$
	and values of all variables in $\varSet$,
	where $\overline{\state} = \{ \state_1, \state_2, \dots, \state_n \}$
	is the vector of each statechart $\statechart_i$'s active state
	and $\valueFun{\varSet}$ indicates the values of all variables in $\varSet$.
\end{definition}

\begin{definition}[Labeled Transition System~\cite{Keller1976FVP,Gorrieri2017LTS}]
	\label{def:LTS}
	A labeled transition system (LTS) is a triple
	$\langle \sysstateSet, \sysstate_0, \labelSet, \rightarrow \rangle$, where
	$\sysstateSet$ is a set of system statuses,
	$\sysstate_0$ is the initial system status,
	$\labelSet$ is a set of labels, and
	$\rightarrow \subseteq \sysstateSet \times \labelSet \times \sysstateSet$ is a transition relation.
%	\begin{itemize}
%		\item $\sysstateSet$ is a set of system statuses,
%		\item $\sysstate_0$ is the initial system status,
%		\item $\labelSet$ is a set of labels,
%		\item $\rightarrow \subseteq \sysstateSet \times \labelSet \times \sysstateSet$ is a transition relation.
%	\end{itemize}
\end{definition}

Given a labeled transition system $\langle \sysstateSet, \sysstate_0, \labelSet, \rightarrow \rangle$,
we write $\sysstate \xrightarrow{\labels} \sysstate'$ as a short notation for $(\sysstate, \labels, \sysstate') \in \rightarrow$.

The \yakindu\ statechart executions
are implemented by execution cycles. In each execution cycle,
every statechart only executes one step to guarantee synchrony.
Additionally, each statechart is assigned
a unique priority and statecharts executions are sequentialized based on
the assigned priority. Further more, the \yakindu\
statecharts also assign a priority to each transition. Only the
transition with the highest priority from all enabled outgoing
transitions of the same state is selected to perform.
Hence, the \yakindu\ statecharts' execution semantics is synchronous
and deterministic. We define the semantics of
a basic \yakindu\ statechart and a network of basic \yakindu\ statecharts
as follows.

\begin{definition}[Semantics of Basic \yakindu\ Statechart]
	\label{def:stSen1}
	Let $\statechart = (\stateSet, \state_0, \tranSet, \guardSet, \actionSet, \varSet, \tranPrioritySet, \inStateActionAssign, \outStateActionAssign)$
	be a basic \yakindu\ statechart.
	The semantics of statechart $\statechart$ is defined as a labeled transition system $\langle \sysstateSet, \sysstate_0, \{\guardSet \cup \actionSet \cup \tranPrioritySet\}, \rightarrow \rangle$,
	where $\sysstateSet \subseteq \stateSet \times \valueFun{\varSet}$ is a set of system status,
	$\valueFun{\varSet}$ is the valuation function of variables $\varSet$,	
	$\sysstate_0 = (\state_0, \indexValueFun{0}{\varSet})$ is the initial system status,
	$\indexValueFun{0}{\varSet}$ denotes the initial values of all variables in $\varSet$,
	and $\rightarrow \subseteq \sysstateSet \times (\guardSet \times \actionSet \times \tranPrioritySet) \times \sysstateSet$
	is the transition relation defined by
	\begin{align}
	\label{eq:Ysenmantics1}
	\begin{split}
	&(\state, \valueFun{\varSet}) \xrightarrow{\guard,<\outAction;\tranAction;\inAction>,\tranPriority} (\state', \valueFun{\varSet}[<\outAction;\tranAction;\inAction>]) \\
	& \qquad \mathtt{ if } \ \exists (\state, \guard, \tranAction, \tranPriority, \state') \in \tranSet :
	(\valueFun{\varSet} \models \guard) \wedge \prifun(\state, \state', \tranPriority, \valueFun{\varSet}),
	\end{split}	
	\end{align}	
	where $\state, \state' \in \stateSet$;
	$\outAction$, $\tranAction$, and $\inAction$ are the actions of
	exiting state $\state$, the transition $\tran$, and entering
	state $\state'$, respectively;
	the notation $<\outAction;\tranAction;\inAction>$ indicates that the three
	actions $\outAction$, $\tranAction$, and $\inAction$ are executed sequentially and atomically;
	$\valueFun{\varSet}[<\outAction;\tranAction;\inAction>]$ means that the variable values are updated by the action $<\outAction;\tranAction;\inAction>$;
	$\valueFun{\varSet} \models \guard$ denotes that $\valueFun{\varSet}$ satisfies the guard $\guard$;
	and
	$\prifun(\state, \state', \tranPriority, \valueFun{\varSet}) \equiv \forall (\state, \guard', *, \tranPriority', *) \in \tranSet : 
	\valueFun{\varSet} \models \guard' \wedge \tranPriority \le \tranPriority'$.
\end{definition}

In Definition~\ref{def:stSen1}, the transition condition
$\prifun(\state, \state', \tranPriority, \valueFun{\varSet})$ requires
that only the transition with the highest priority
among all enabled outgoing transitions of the same state
is triggered. As each transition in \yakindu\ statecharts
has a unique priority, hence the condition
$\prifun(\state, \state', \tranPriority, \valueFun{\varSet})$
guarantees deterministic execution.

\begin{definition}[Semantics of a Network of Basic \yakindu\ Statecharts]
	\label{def:stSen2}
	Let $\statechartSet=\{ (\statechart_i, \statechartPriority_i) | \statechart_i = (\stateSet_i, \state^0_i, \tranSet_i, \guardSet_i, \actionSet_i, \varSet, \tranPrioritySet_i,\\ \inStateActionAssign^i, \outStateActionAssign^i) \wedge \statechartPriority_i = i \wedge 1 \le i \le n \}$
	be a network of $n$ basic \yakindu\ statecharts
	sorted by statechart priority in decreasing order (i.e.,
	the statechart $\statechart_1$ has the highest priority),	
	and	
	$\overline{\state^0} = \{ \state^0_1, \state^0_2, \dots, \state^0_n \}$ be the initial state vector.
	The semantics of the network of statecharts $\statechartSet$ is defined as a labeled transition system $\langle \sysstateSet, \sysstate_0, \{ \guardSet \cup \actionSet \cup \tranPrioritySet \cup \{ \statechartPriority \}, \rightarrow \rangle$,
	where $\sysstateSet \subseteq (\stateSet_1 \times \stateSet_2 \times \dots \times \stateSet_n) \times \valueFun{\varSet}$ is a set of system status,
	$\valueFun{\varSet}$ is the valuation function of variables $\varSet$,	
	$\sysstate_0 = (\overline{\state^0}, \indexValueFun{0}{\varSet})$	is the initial system status,
	$\indexValueFun{0}{\varSet}$ denotes the initial values of all variables in $\varSet$,
	and $\rightarrow \subseteq \sysstateSet \times (\guardSet \times \actionSet \times \tranPrioritySet \times \{ \statechartPriority \}) \times \sysstateSet$
	is the transition relation defined by	
	\begin{align}
	\label{eq:Ysenmantics2-1}	
	\begin{split}
	&(\overline{\state},\valueFun{\varSet}) \xrightarrow{\statechartPriority_i} (\overline{\state}, \valueFun{\varSet}[\increaseStep]) \\
	&\qquad \mathtt{ if } \
	\statechartPriority_i==\step \ \wedge \
	\forall (\state_i, \guard, *, *, *) \in \tranSet_i : \valueFun{\varSet} \not\models \guard
	\end{split}					
	\end{align}
	and
	\begin{align}
	\label{eq:Ysenmantics2-2}
	\begin{split}
	&(\overline{\state},\valueFun{\varSet})
	\xrightarrow{\guard,<\outAction;\tranAction;\inAction>,\tranPriority,\statechartPriority_i}
	(\overline{\state}[\state'_i/\state_i], \valueFun{\varSet}[<\outAction;\tranAction;\inAction;\increaseStep>])\\
	&\quad \mathtt{if} \ \statechartPriority_i==\step \ \wedge \
	(\exists (\state_i, \guard, \tranAction, \tranPriority, \state'_i) \in \tranSet_i :
	\valueFun{\varSet} \models \guard \wedge \prifun(\state_i, \state'_i, \tranPriority, \valueFun{\varSet})),
	\end{split}			
	\end{align}
	where $\overline{\state}[\state'_i/\state_i]$ denotes that the $i$th element
	$\state_i$ of vector $\overline{\state}$ is replaced by $\state'_i$,
	$\step$ indicates the index of the statechart to be executed, $\indexValueFun{0}{\step} = 1$,
	and $\increaseStep \equiv (\step+1) \mod n$.
\end{definition}

In Definition~\ref{def:stSen2}, the transition condition $\statechartPriority_i=\step$ in
formulas~\eqref{eq:Ysenmantics2-1}and~\eqref{eq:Ysenmantics2-2}
guarantees the deterministic execution semantics, i.e., only the statechart
whose pre-assigned priority is the same with the statechart
execution index $\step$ is executed.
The valuation update function $\increaseStep$ guarantees the synchronous
execution semantics, i.e., every statechart only executes one step
in each \yakindu\ execution cycle.
Hence, the two formulas $\statechartPriority_i=\step$
and $\increaseStep$ together guarantees that each statechart only
executes one step sequentially based on the assigned priority,
i.e., deterministic and synchronous execution.
The formula~\eqref{eq:Ysenmantics2-1} denotes that
there is no transition enabled in the statechart $\statechart_i$,
hence the only action is to update the statechart
execution index $\step$ by $\increaseStep$ to ensure synchronous execution.
The formula~\eqref{eq:Ysenmantics2-2} represents the transition triggered
by guard satisfaction in statechart $\statechart_i$.

\section{Transformations from Statecharts to Timed Automata}
\label{sec:y2u}
Our previous work~\cite{Guo2016ICCPS} presented a set of transformation
rules from statecharts to timed automata.
In this section, we formalize the transformation rules.
In particular, we first introduce
the syntax and execution semantics of \uppaal\ timed automata.
Then, based on the formal representations of \yakindu\ statecharts
and \uppaal\ timed automata, we give the formal definitions
of these transformation rules.

\subsection{Introduction of \uppaal\ Timed Automata}
\label{subsec:y2u-uppaal}
\uppaal\ is a verification toolbox
based on the timed automata theory~\cite{alur1994theory}.
For self-containment, we briefly introduce the syntax and semantics
of \uppaal\ timed automata given in~\cite{behrmann2004tutorial}.
For consistence, we use the same notations, which are used in basic
\yakindu\ statechart definitions in Section~\ref{sec:formalism},
to represent the equivalent elements in \uppaal\ timed automata.

\begin{definition}[Timed Automaton~\cite{behrmann2004tutorial}]
	\label{def:ta}	
	A timed automaton $\automata$ is a tuple $(\stateSet, \state_0, \tranSet, \guardSet, \actionSet, \varSet, \clockSet, \invariant)$,
	where $\stateSet$ is a set of locations, $\state_0 \in \stateSet$ is the initial location,	
	$\actionSet$ is a set of actions, $\guardSet$ is a set of guards, $\varSet$ is a set of variables,
	$\clockSet$ is a set of clocks, $\tranSet \subseteq \stateSet \times (\guardSet \times \actionSet \times \clockSet) \times \stateSet$
	is a set of edges between locations with a guard, an action, and a set of clocks to be reset,
	and $\invariant \subseteq \stateSet \times \clockSet$ assigns clock invariants to locations.
\end{definition}

\begin{definition}[Network of Timed Automata~\cite{behrmann2004tutorial}]
	\label{def:taMul}
	A network of timed automata $\automataSet$ with $n$ timed automata is defined as
	$\automataSet = \{ \automata_i | 1 \le i \le n \}$,
	where $\automata_i = (\stateSet_i, \state^0_i, \tranSet_i, \guardSet_i, \actionSet_i, \varSet_i, \clockSet_i, \invariant_i)$ and	
	$\forall \automata_i, \automata_j \in \automataSet: \varSet_i = \varSet_j \wedge \clockSet_i = \clockSet_j$
	(i.e., all timed automata in $\automataSet$ share the same variable set $\varSet$ and the same clock set $\clockSet$).
\end{definition}

The concepts of \textit{valuation function}, \textit{system status},
and \textit{labeled transition system}, i.e., Definitions~\ref{def:valFun}-\ref{def:LTS},
are also used to define the semantics of \uppaal\ timed automata.

\begin{definition}[Semantics of Timed Automaton~\cite{behrmann2004tutorial}]
	\label{def:taSen1}
	Let $\automata = (\stateSet, \state_0, \tranSet, \guardSet, \actionSet, \varSet, \clockSet, \invariant)$
	be a timed automaton. The semantics is defined as a labeled transition system $\langle \sysstateSet, \sysstate_0, \{ \mathbb{R}_{\ge 0} \cup \actionSet \cup \guardSet \cup \clockSet \}, \rightarrow \rangle$,
	where $\sysstateSet \subseteq \stateSet \times (\valueFun{\varSet} \times \valueFun{\clockSet})$ is a set of system status, $\sysstate_0 = (\state_0, \{ \indexValueFun{0}{\varSet} \cup \indexValueFun{0}{\clockSet} \})$
	is the initial system status, $\indexValueFun{0}{\varSet}$ and $\indexValueFun{0}{\clockSet}$ denote the initial values of all variables in $\varSet$ and all clocks in $\clockSet$, respectively, and $\rightarrow \subseteq \sysstateSet \times (\mathbb{R}_{\ge 0} \times \actionSet \times \guardSet \times \clockSet) \times \sysstateSet$
	is the transition relation defined by
	\begin{align}
	\label{eq:Usenmantics1}
	\begin{split}
	&(\state, \valueFun{\clockSet}) \xrightarrow{d} (\state, \valueFun{\clockSet}+d) \\
	&\qquad \mathtt{ if } \ \forall d' : 0 \le d' \le d \implies
	\valueFun{\clockSet}+d' \models \invariant(\state)
	\end{split}
	\end{align}	
	and
	\begin{align}
	\label{eq:Usenmantics2}
	\begin{split}
	&(\state, \valueFun{\varSet}) \xrightarrow{\guard,\action, \clock} (\state', \valueFun{\varSet}[\action]) \\
	&\qquad \mathtt{ if } \ \exists  (\state, \guard, \action, \clock, \state') \in \tranSet : \valueFun{\varSet} \models \guard \wedge \\
	&\qquad\qquad \valueFun{\clockSet}[\action]= \valueFun{\clockSet}[\clock := 0] \wedge
	\valueFun{\clockSet}[\action] \models \invariant(\state),
	\end{split}	
	\end{align}	
	where $d$ is non-negative real number,
	$\clock := 0$ denotes reset the clock $c$ to be $0$,
	$\valueFun{\clockSet}$ is a clock valuation function from the set of clocks to the non-negative real numbers,
	$\valueFun{\clockSet} \models \invariant(\state)$ indicates $\valueFun{\clockSet}$ satisfies invariant $\invariant(\state)$,
	and $\valueFun{\varSet} \models \guard$ means $\valueFun{\varSet}$ satisfies guard $\guard$.
\end{definition}

In Definition~\ref{def:taSen1}, the transitions represented by formula~\eqref{eq:Usenmantics1}
and formula~\eqref{eq:Usenmantics2} denotes time-passing and location transitions, respectively.

\begin{definition}[Semantics of a Network of Timed Automata~\cite{behrmann2004tutorial}]
	\label{def:taSen2}
	Let $\automataSet = \{ \automata_i | \automata_i = (\stateSet_i, \state^0_i, \tranSet_i, \guardSet_i, \actionSet_i, \varSet_i, \clockSet_i, \invariant_i) \wedge 1 \le i \le n \}$ be a network of $n$ timed automata,
	and let $\overline{\state^0} = \{ \state^0_1, \state^0_2, \dots, \state^0_n \}$ be the initial location vector.
	The semantics of the network is defined as a labeled transition system $\langle \sysstateSet, \sysstate_0, \{ \mathbb{R}_{\ge 0} \cup \actionSet \cup \guardSet \cup \clockSet \}, \rightarrow \rangle$,
	where $\sysstateSet = (\stateSet_1 \times \stateSet_2 \times \dots \times \stateSet_n) \times (\valueFun{\varSet} \times \valueFun{\clockSet})$
	is a set of system status, $\sysstate_0 = (\overline{\state^0}, \{ \indexValueFun{0}{\varSet} \cup \indexValueFun{0}{\clockSet} \})$ is the initial system status,
	$\indexValueFun{0}{\varSet}$ and $\indexValueFun{0}{\clockSet}$ denote the initial values of all variables in $\varSet$ and all clocks in $\clockSet$, respectively, and $\rightarrow \subseteq \sysstateSet \times (\mathbb{R}_{\ge 0} \times \actionSet \times \guardSet \times \clockSet) \times \sysstateSet$
	is the transition relation defined by:
		\begin{align}
		\label{eq:Usenmantics3}
		\begin{split}
		&(\overline{\state}, \valueFun{\clockSet}) \xrightarrow{d} (\overline{\state}, \valueFun{\clockSet} + d) \\
		&\qquad \mathtt{ if } \ \forall d' : 0 \le d' \le d \implies \valueFun{\clockSet}+d' \models \invariant(\overline{\state}),
		\end{split}
		\end{align}
		\begin{align}
		\label{eq:Usenmantics4}
		\begin{split}
		&(\overline{\state}, \valueFun{\varSet}) \xrightarrow{\guard,\action, \clock} (\overline{\state}[\state_i^{'}/\state_i], \valueFun{\varSet}[\action]) \\
		&\qquad\mathtt{ if } \ \exists (\state_i, \guard, \action, \clock, \state'_i) \in \tranSet_i \ :
		\valueFun{\varSet} \models \guard \wedge \\
		&\qquad\qquad \valueFun{\clockSet}[\action]= \valueFun{\clockSet}[\clock := 0] \wedge
		\valueFun{\clockSet}[\action] \models I(\overline{\state}),
		\end{split}			
		\end{align}
		and
		\begin{align}
		\label{eq:Usenmantics5}
		\begin{split}
		&(\overline{\state}, \valueFun{\varSet}) \xrightarrow{\guard,\action, \channel, \clock}
		(\overline{\state}[\state_i^{'}/\state_i, \state_j^{'}/\state_j], \valueFun{\varSet}[\action_i;\action_j])\\
		& \mathtt{ if } \ \exists 
		(\state_i, \guard_i, \action_i, \channel?, \clock_i, \state'_i) \in \tranSet_i \wedge 
		(\state_j, \guard_j, \action_j, \channel!, \clock_j, \state'_j) \in \tranSet_j \ : \\
		&\qquad
		\valueFun{\varSet} \models (\guard_i \wedge \guard_j) \wedge \valueFun{\clockSet}[\action_i;\action_j]=\valueFun{\clockSet}[\clock_i:=0;\clock_j:=0] \wedge \valueFun{\clockSet}[\action_i;\action_j] \models \invariant(\overline{\state}),
		\end{split}			
		\end{align}	
	where $d$ is non-negative real number, 
	$\clock := 0$ denotes reset the clock $c$ to be $0$,
	$\valueFun{\clockSet}$ is a clock valuation function from the set of clocks to the non-negative real numbers, and
	$\overline{\state}[\state_i^{'}/\state_i]$ denotes that the $i$th element $\state_i$ of vector $\overline{\state}$ is replaced by $\state'_i$.
\end{definition}

In Definition~\ref{def:taSen2}, the transitions represented by formula~\eqref{eq:Usenmantics3}, formular~\eqref{eq:Usenmantics4},
and formula~\eqref{eq:Usenmantics5} denote time-passing, transitions triggered by local guard satisfactions, and transitions triggered by synchronizations, respectively.

\subsection{Transformation Rules}
\label{subsec:y2u-rule}
\yakindu\ statecharts and \uppaal\ timed automata have two major differences:
(1) syntactic difference: they have different syntactic element sets, for instance, \textit{events}
and \textit{timing triggers} are elements in \yakindu\ statecharts and they do not have corresponding elements in \uppaal\ automata;
and (2) execution semantics difference: \yakindu\ model has deterministic
and synchronous execution semantics while the execution of \uppaal\ model is non-deterministic
and asynchronous.

As mentioned in Section~\ref{sec:formalism}, \yakindu\ statecharts
contain essential elements and complex elements (also called syntactic sugars).
The essential elements include \textit{states}, \textit{transitions},
\textit{guards}, \textit{actions}, and \textit{variables}.
For elements that have equivalent elements in \uppaal\
timed automata, i.e., \textit{state}, \textit{transition}, and \textit{variable}
elements, they are transformed by one-to-one syntactic mappings
presented in Rule~\ref{rule:initial}.
We transform the \textit{action} and \textit{guard} elements
through Rule~\ref{rule:action} and Rules~\ref{rule:guard}-\ref{rule:timer},
respectively. Note that the \textit{event} and \textit{timing trigger}
elements are implemented as \textit{guards} in \yakindu\ statecharts.

For complex elements, such as \textit{composite states}
and \textit{choice}, we have implemented model patterns to represent them
with essential elements~\cite{Guo2017CBMS}.
By applying the model patterns~\cite{Guo2017CBMS}, \yakindu\ statechart
models with complex elements can be transformed to \uppaal\ timed automata
with essential element transform rules, i.e., Rules~\ref{rule:initial}-\ref{rule:timer}.

To address the execution semantics difference, we enforce the deterministic
and synchronous execution semantics in transformed \uppaal\ timed
automata through Rule~\ref{rule:priority} and Rule~\ref{rule:synchrony}.
The formal definitions of every transformation rules are presented below.
We use the \yakindu\ statechart model shown in Fig.~\ref{fig:stEx} as an example to illustrate
each transformation rule.
Note that the transformation rules are expected to be applied in sequence.
Before given the transformation rules, we introduce notations
$:=$ to denote the assignment and $A(B)$ to denote the $B$ element of $A$.

\subsubsection{Initialization}
\label{subsubsec:initial}
\text{}

\textbf{Rule~\ref{rule:initial}: Initialization.}
Both \yakindu\ statecharts and \uppaal\ timed automata have \textit{state}
elements (called \textit{location} in \uppaal\ timed automata), \textit{transition} elements
(called \textit{edge} in \uppaal\ timed automata), and \textit{variables}.
We initialize a network of \uppaal\ timed automata by one-to-one mapping
\textit{states}, \textit{transitions}, and \textit{variables} in the given
\yakindu\ statecharts to \textit{locations}, \textit{edges}, and
\textit{variables}, respectively.
The formalism of the initialization rule is defined in Rule~\ref{rule:initial}.
We use Example~\ref{ex:initial} to explain the initialization rule.

\begin{myrule}[Initialization Rule]
	\label{rule:initial}
	Given a network of basic \yakindu\ statecharts
	$\statechartSet=\{ (\statechart_i, \statechartPriority_i) | \statechart_i = (\stateSet_i, \state^0_i, \tranSet_i, \guardSet_i, \actionSet_i,\\ \varSet, \tranPrioritySet_i, \inStateActionAssign^i, \outStateActionAssign^i) \wedge 1 \le i \le n \}$,
	for each basic \yakindu\ statechart $\statechart_i \in \statechartSet$,
	create a \uppaal\ timed automaton $\automata_i = (\statechart_i(\stateSet_i), \statechart_i(\state^0_i), \statechart_i(\tranSet_i), \emptyset, \emptyset, \varSet, \emptyset, \emptyset)$
	and
	denote $\automataSetStep{1} = \{ \automata_i | 1 \le i \le n \}$ as the network of \uppaal\ timed automata after initialization.
\end{myrule}

\begin{example}
	\label{ex:initial}
	We initialize a \uppaal\ timed automata model $\automataSetStep{1}$ for the
	statechart model shown in Fig.~\ref{fig:stEx}, i.e., $\statechartSet=\{ (\statechart_1, 1), (\statechart_2, 2) \}$
	specified in Example~\ref{ex:stEx}.
	Fig.~\ref{fig:initialEx} depicts the initialized timed automata model.
	In \uppaal\ GUI, the initial location is depicted by a double circle.
	\begin{figure}[ht]
		\centering
		\includegraphics[width = 0.7\textwidth]{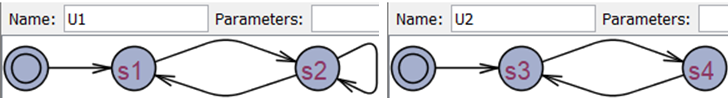}
		\caption{Timed Automata Model after Initialization}
		\label{fig:initialEx}
	\end{figure}
	
	According to Rule~\ref{rule:initial}, for statechart $\statechart_1$,
	we create a timed automaton
	$\automata_1 = (\stateSet_{\u}^1, \state^1_0, \tranSet_{\u}^1, \emptyset, \emptyset, \varSet_{\u}, \emptyset, \emptyset)$
	by one-to-one mapping states, transitions, and variables of $\statechart_1$,
	i.e., $\stateSet_{\u}^1 = \stateSet_{\y}^1 = \{ \state_0^1, \state_1, \state_2 \}$,
	$\tranSet_{\u}^1 = \{ \tran_{\u}^1,\tran_{\u}^2,\tran_{\u}^3,\tran_{\u}^4 \}$,	
	$\tran_{\u}^1 = (\state_0^1, \mathtt{NULL}, \mathtt{NULL}, \mathtt{NULL}, \state_1)$,	
	$\tran_{\u}^2 = (\state_1, \mathtt{NULL}, \mathtt{NULL}, \mathtt{NULL}, \state_2)$,
	$\tran_{\u}^3 = (\state_2, \mathtt{NULL}, \mathtt{NULL}, \mathtt{NULL}, \state_1)$,
	$\tran_{\u}^4 = (\state_2, \mathtt{NULL}, \mathtt{NULL}, \mathtt{NULL},\\ \state_2)$,	
	and $\varSet_{\u} = \varSet_{\y} = \{ x, \mathtt{eventA} \}$.
	
	Similarly, we create a timed automaton
	$\automata_2 = (\stateSet_{\u}^2, \state^0_2, \tranSet_{\u}^2, \emptyset, \emptyset, \varSet_{\u}, \emptyset, \emptyset)$
	for statechart $\statechart_2$, where	
	$\stateSet_{\u}^2 = \{ \state_0^2, \state_3, \state_4 \}$,	
	$\tranSet_{\u}^2 = \{ \tran_{\u}^5, \tran_{\u}^6, \tran_{\u}^7 \}$,	
	$\tran_{\u}^5 = (\state_0^2, \mathtt{NULL}, \mathtt{NULL}, \mathtt{NULL}, \state_3)$,	
	$\tran_{\u}^6 = (\state_3, \mathtt{NULL}, \mathtt{NULL}, \mathtt{NULL}, \state_4)$,	and
	$\tran_{\u}^7 = (\state_4, \mathtt{NULL}, \mathtt{NULL}, \mathtt{NULL}, \state_3)$.
	
	Hence, the timed automata model after initialization is
	$\automataSetStep{1} = \{ \automata_1, \automata_2 \}$.	
\end{example}

\subsubsection{Essential Elements Transformation}
\label{subsubsec:basicRule}
\text{}

\textbf{Rule~\ref{rule:action}: Action Transformation.}
\yakindu\ statecharts support both guard actions and state actions,
while \uppaal\ timed automata only support guard actions.
\yakindu\ statecharts have two types of state actions, \textit{entry action} and
\textit{exit action}, which are carried out on entering or exiting a
state, respectively. The \textit{entry} and \textit{exit} actions of
a state are transformed into actions of all incoming and outgoing
transitions of the state, respectively.
The formalism of the action transformation rule is defined in Rule~\ref{rule:action}.
We use Example~\ref{ex:action} to explain the action transformation rule.

\begin{myrule}[Action Transformation Rule]
	\label{rule:action}
	Given a network of basic \yakindu\ statecharts
	$\statechartSet=\{ (\statechart_i, \statechartPriority_i) | \statechart_i = (\stateSet_i, \state^0_i, \tranSet_i, \guardSet_i, \actionSet_i, \varSet, \tranPrioritySet_i, \inStateActionAssign^i, \outStateActionAssign^i) \wedge 1 \le i \le n \}$
	and the network of \uppaal\ timed automata
	$\automataSetStep{1} = \{ \automata_i | \automata_i = (\stateSet_i, \state^0_i, \tranSet_i, \emptyset, \emptyset, \varSet, \emptyset, \emptyset) \wedge 1 \le i \le n \}$ after initialization,
	let $\tran_{\y}$ denote a transition in statechart $\statechart_i$
	and $\tran_{\u}$ be $\tran_{\y}$'s corresponding edge in the time automata $\automata_i$,
	let $\inState_{\tran_{\y}}$ and $\outState_{\tran_{\y}}$ denote the source state
	and the destination state of transition $\tran_{\y}$, respectively,
	and $\state(\inAction)$ and $\state(\outAction)$ denote the entry actions and exit actions of
	state $\state$ in statechart $\statechart_i$, respectively,
	and let $<\action_1; \action_2>$ denote that the two actions $\action_1$ and $\action_2$ are executed sequentially and atomically,				
	we set each edge $\tran_{\u}$'s action in \uppaal\ timed automata $\automata_i$ with
	an atomic action that consists of 
	exit actions of $\tran_{\y}$'s source state $\inState_{\tran_{\y}}(\outAction)$,
	transition $\tran_{\y}$'s action $\tran_{\y}(\action)$, and entry actions of $\tran_{\y}$'s destination state $\outState_{\tran_{\y}}(\inAction)$ in sequence, i.e.,		
	\begin{align*}
	\begin{split}
	\forall \statechart_i \in \statechartSet: \forall \tran_{\y} \in \statechart_i(\tranSet_i): \tran_{\u}(\action) := <\inState_{\tran_{\y}}(\outAction); \tran_{\y}(\action); \outState_{\tran_{\y}}(\inAction)>	
	\end{split}		
	\end{align*}	
	We use $\automataSetStep{2} = \{ \automata_i | \automata_i = (\stateSet_i, \state^0_i, \tranSet_i, \emptyset, \{\tran_{\u}(\action) | \tran_{\u} \in \tranSet_i \}, \varSet, \emptyset, \emptyset) \wedge 1 \le i \le n \}$ to denote the network of \uppaal\ timed automata after the action transformation.
\end{myrule}

\begin{example}
	\label{ex:action}
	We transform actions in the statechart model $\statechartSet$ specified in Example~\ref{ex:stEx}
	and update the timed automata model $\automataSetStep{1}$ initialized in Example~\ref{ex:initial}.
	Fig.~\ref{fig:actionEx} depicts the timed automata model after action transformation.
	\begin{figure}[ht]
		\centering
		\includegraphics[width = 0.7\textwidth]{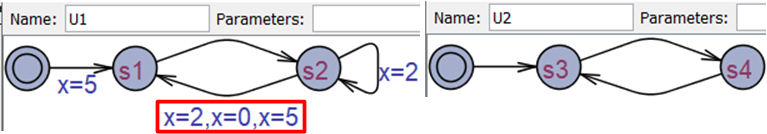}
		\caption{Timed Automata Model after Action Transformation}
		\label{fig:actionEx}
	\end{figure}
	
	We take the transition $\tran_{\y}^3 = (\state_2, x>0, x=0, 1, \state_1)$ of statechart $\statechart_1$
	as an example to illustrate the action transformation rule.
	The action on transition $\tran_{\y}^3$ is $x=0$, i.e., $\tran_{\y}^3(\action) \equiv x=0$.
	The exit action of transition $\tran_{\y}^3$'s source state $\state_2$ is $x=2$, i.e., $\inState_{\tran_{\y}^3}(\outAction) \equiv x=2$.
	The entry action of transition $\tran_{\y}^3$'s destination state $\state_1$ is $x=5$, i.e., $\outState_{\tran_{\y}^3}(\inAction) \equiv x=5$.
	According to Rule~\ref{rule:action}, the action of transition $\tran_{\u}^3$ in timed automaton $\automata_1$
	is $<x=2; x=0; x=5>$, i.e., $\tran_{\u}^3 = (\state_2, \mathtt{NULL}, <x=2; x=0; x=5>, \mathtt{NULL}, \state_1)$, which is marked by a red rectangle in Fig.~\ref{fig:actionEx}.
	
	Similarly, we update transitions $\tran_{\u}^1$ and $\tran_{\u}^4$ as	
	$\tran_{\u}^1 = (\state_0^1, \mathtt{NULL}, x=5, \mathtt{NULL}, \state_1)$ and	
	$\tran_{\u}^4 = (\state_2, \mathtt{NULL}, x=2, \mathtt{NULL}, \state_2)$.
	The action set of timed automaton $\automata_1$ is updated to be $\actionSet_{\u}^1 = \{ <x=2; x=0; x=5>, x=5,x=2 \}$.	
	All other elements of timed automaton $\automata_1$ are not changed.	
	As the statechart $\statechart_2$ does not contain actions,
	the timed automaton $\automata_2$ is the same with Example~\ref{ex:initial}.
	
	Hence, the timed automata model after action transformation is
	$\automataSetStep{2} = \{ \automata_1, \automata_2 \}$, where
	$\automata_1= (\stateSet_{\u}^1, \state^0_1, \tranSet_{\u}^1, \emptyset, \actionSet_{\u}^1, \varSet_{\u},\\ \emptyset, \emptyset)$
	and
	$\automata_2 = (\stateSet_{\u}^2, \state^0_2, \tranSet_{\u}^2, \emptyset, \emptyset, \varSet_{\u}, \emptyset, \emptyset)$.		
\end{example}

\textbf{Rule~\ref{rule:guard}: Guard Transformation.}
The \textit{guard} of a transition in \yakindu\ statecharts
is mapped to the corresponding edge in \uppaal\ timed automata.
If a transition guard contains \textit{events} or \textit{timing triggers},
we transform them with Rule~\ref{rule:event} and Rule~\ref{rule:timer}
given below, respectively.
The formalism of the guard transformation rule is defined in Rule~\ref{rule:guard}.
We use Example~\ref{ex:guard} to explain the guard transformation rule.

\begin{myrule}[Guard Transformation Rule]
	\label{rule:guard}
	Given a network of basic \yakindu\ statecharts
	$\statechartSet=\{ (\statechart_i, \statechartPriority_i) | \statechart_i = (\stateSet_i, \state^0_i, \tranSet_i, \guardSet_i, \actionSet_i, \varSet, \tranPrioritySet_i, \inStateActionAssign^i, \outStateActionAssign^i) \wedge 1 \le i \le n \}$
	and the network of \uppaal\ timed automata
	$\automataSetStep{2} = \{ \automata_i | \automata_i = (\stateSet_i, \state^0_i, \tranSet_i, \emptyset, \actionSet_i, \varSet, \emptyset, \emptyset) \wedge 1 \le i \le n \}$ after \textit{action} transformation,
	let $\tran_{\y}$ denote a transition in statechart $\statechart_i$ and $\tran_{\u}$ be $\tran_{\y}$'s
	corresponding edge in timed automata $\automata_i$,
	we set each edge $\tran_{\u}$'s guard in \uppaal\ timed automata $\automata_i$ with corresponding
	transition $\tran_{\y}$'s guard in \yakindu\ statechart $\statechart_i$, i.e.,				
	\begin{align*}			
	\forall \statechart_i \in \statechartSet: \forall \tran_{\y} \in \statechart_i(\tranSet_i): \tran_{\u}(\guard) := \tran_{\y}(\guard).
	\end{align*}		
	We use $\automataSetStep{3} = \{ \automata_i | \automata_i = (\stateSet_i, \state^0_i, \tranSet_i, \{ \tran_{\u}(\guard) | \tran_{\u}\in \tranSet_i \}, \actionSet_i, \varSet, \emptyset, \emptyset) \wedge 1 \le i \le n \}$ to denote the network of \uppaal\ timed automata after guard transformation.
\end{myrule}

\begin{example}
	\label{ex:guard}
	We transform guards in the statechart model $\statechartSet$ specified in Example~\ref{ex:stEx}
	and update the timed automata model $\automataSetStep{2}$ transformed in Example~\ref{ex:action}.
	Fig.~\ref{fig:guardEx} depicts the timed automata model after guard transformation.
	\begin{figure}[ht]
		\centering
		\includegraphics[width = 0.7\textwidth]{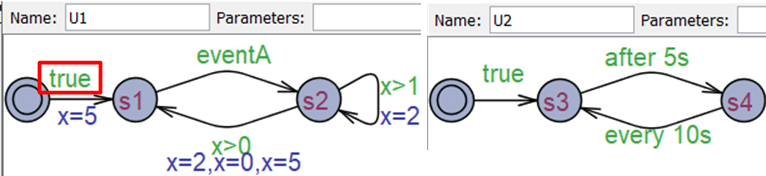}
		\caption{Timed Automata Model after Guard Transformation}
		\label{fig:guardEx}
	\end{figure}
	
	We take the transition $\tran_{\y}^1 = (\state_0^1, \mathtt{true}, \mathtt{NULL}, 1, \state_1)$ as an example to illustrate the guard transformation rule.
	According to Rule~\ref{rule:guard}, the guard of transition $\tran_{\y}^1$'s corresponding edge $\tran_{\u}^1$ is set to be $\tran_{\y}^1$'s guard $\mathtt{true}$, i.e., $\tran_{\u}^1 = (\state_0^1, \mathtt{true}, x=5, \mathtt{NULL}, \state_1)$, which is marked by a red rectangle in Fig.~\ref{fig:guardEx}.
	
	Similarly, we update other six transitions by transforming guards as follows:
	$\tran_{\u}^2 = (\state_1, \mathtt{eventA}, \mathtt{NULL}, \mathtt{NULL}, \state_2)$,
	$\tran_{\u}^3 = (\state_2, x>0, <x=2; x=0; x=5>, \mathtt{NULL}, \state_1)$,
	$\tran_{\u}^4 = (\state_2, x>1, x=2, \mathtt{NULL}, \state_2)$,
	$\tran_{\u}^5 = (\state_0^2, \mathtt{true}, \mathtt{NULL}, \mathtt{NULL}, \state_3)$,	
	$\tran_{\u}^6 = (\state_3, \mathtt{after \ 5s}, \mathtt{NULL}, \mathtt{NULL}, \state_4)$,	and
	$\tran_{\u}^7 = (\state_4, \mathtt{every \ 10s}, \mathtt{NULL}, \mathtt{NULL}, \state_3)$.
	The guard sets of timed automata $\automata_1$ and $\automata_2$ are
	$\guardSet_{\u}^1 = \{ \mathtt{true}, \mathtt{eventA}, x>0, x>1 \}$
	and
	$\guardSet_{\u}^2 = \{ \mathtt{true}, \mathtt{after \ 5s},\mathtt{every \ 10s} \}$, respectively.	
	All other elements of timed automata $\automata_1$ and $\automata_2$ are the same with Example~\ref{ex:action}.	
	
	Hence, the timed automata model after guard transformation is
	$\automataSetStep{3} = \{ \automata_1, \automata_2 \}$, where
	$\automata_1= (\stateSet_{\u}^1, \state^0_1, \tranSet_{\u}^1, \guardSet_{\u}^1, \actionSet_{\u}^1, \varSet_{\u},\\ \emptyset, \emptyset)$
	and
	$\automata_2 = (\stateSet_{\u}^2, \state^0_2, \tranSet_{\u}^2, \guardSet_{\u}^2, \emptyset, \varSet_{\u}, \emptyset, \emptyset)$.		
\end{example}

\textbf{Rule~\ref{rule:event}: Event Transformation.}
In \uppaal\ timed automata, we simulate the event occurrence
by an auxiliary \textit{event} automaton. The event automaton contains
only one location and has a self-loop edge which is synchronized
with the transformed timed automata's edges that are triggered by
the corresponding event. The synchronization between
the transformed timed automata and the event automaton
is through a channel declared for the corresponding event.
The formalism of the event transformation is defined in Rule~\ref{rule:event}.
We use Example~\ref{ex:event} to explain the event transformation rule.

\begin{myrule}[Event Transformation Rule]
	\label{rule:event}
	Given a network of basic \yakindu\ statecharts
	$\statechartSet=\{ (\statechart_i, \statechartPriority_i) | \statechart_i = (\stateSet_i, \state^0_i, \tranSet_i, \guardSet_i, \actionSet_i, \varSet, \tranPrioritySet_i, \inStateActionAssign^i, \outStateActionAssign^i) \wedge 1 \le i \le n \}$
	and the network of \uppaal\ timed automata
	$\automataSetStep{3} = \{ \automata_i | \automata_i = (\stateSet_i, \state^0_i, \tranSet_i, \guardSet_i, \actionSet_i, \varSet, \emptyset, \emptyset) \wedge 1 \le i \le n \}$
	after \textit{guard} transformation,
	let $\varSet_{\event} \subseteq \statechart_i(\varSet)$ denote the event variable set of the network of basic \yakindu\ statecharts $\statechartSet$.
	For each event variable $\event$, we change the variable to be
	channel type, i.e., $\mathtt{chan} \ \event$, and	
	create an \textit{event} automaton
	$\eventAutomata = (\{ \state_0 \}, \state_0,
	\{(\state_0, \event!, \mathtt{NULL}, \mathtt{NULL}, \state_0)\}, \{\event!\}, \emptyset, \varSet, \emptyset, \emptyset)$.
	In addition, for every guard in $\automataSetStep{3}$, we substitute $\event$ with $\event?$, i.e.,
	\begin{align*}
	\forall \automata_i \in \automataSetStep{3}: \forall \guard \in \automata_i(\guardSet_i): \forall \event \in \eventVarSet: \guard[\event? / \event].
	\end{align*}
	We use $\automataSetStep{4} = \{ \automata_i | \automata_i = (\stateSet_i, \state^0_i, \tranSet_i, \guardSet_i[\event?/\event], \actionSet_i, \varSet, \emptyset, \emptyset) \wedge 1 \le i \le n \} \cup \automataSet_{\event}$ to denote the network of \uppaal\ timed automata after event transformation, where $\automataSet_{\event} = \{ \eventAutomata | \event \in \varSet_{\event} \}$ and	
	$\guardSet_i[\event?/\event]$ denotes that each event variable $\event$ of every guard in $\guardSet_i$ is replaced by $\event?$.
\end{myrule}

\begin{example}
	\label{ex:event}
	We transform events in the statechart model $\statechartSet$ specified in Example~\ref{ex:stEx}
	and update the timed automata model $\automataSetStep{3}$ transformed in Example~\ref{ex:guard}.
	Fig.~\ref{fig:eventEx} depicts the timed automata model after event transformation.
	\begin{figure}[ht]
		\centering
		\includegraphics[width = 0.7\textwidth]{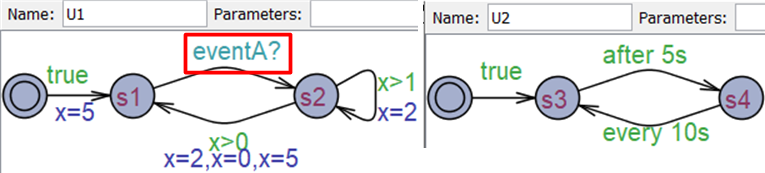}
		\caption{Timed Automata Model after Event Transformation}
		\label{fig:eventEx}
	\end{figure}

	The statechart model $\statechartSet$ only contains one event variable
	$\mathtt{eventA}$. According to Rule~\ref{rule:event}, we first change the
	variable $\mathtt{eventA}$ to be channel type, i.e., re-declare the variable
	by $\mathtt{chan \ eventA}$. Then, we create an auxiliary event automaton
	$\eventAutomata$ with one location and a self-loop edge to
	simulate the occurrence of $\mathtt{eventA}$, as shown in Fig.~\ref{fig:eventAutomataEx}. The formal representation of the
	event automaton is	
	$\eventAutomata = (\stateSet_{\event}, \state_0^{\event}, \tranSet_{\event}, \guardSet_{\event}, \emptyset, \varSet_{\u}, \emptyset, \emptyset)$, where
	$\stateSet_{\event} = \{ \state_0^{\event} \}$,
	$\tranSet_{\event} = \{(\state_0^{\event}, \mathtt{eventA}!, \mathtt{NULL}, \mathtt{NULL}, \state_0^{\event})\}$,
	$\guardSet_{\event} = \{\mathtt{eventA}!\}$, and
	$\varSet_{\u} = \{ x,\mathtt{eventA}\}$.
	\begin{figure}[ht]
		\centering
		\includegraphics[width = 0.1\textwidth]{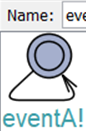}
		\caption{Event Automaton}
		\label{fig:eventAutomataEx}
	\end{figure}

	In the timed automata model $\automataSetStep{3}$, only the transition
	$\tran_{\u}^2 = (\state_1, \mathtt{eventA}, \mathtt{NULL}, \mathtt{NULL}, \state_2)$ of timed automaton $\automata_1$ is triggered by $\mathtt{eventA}$.
	According to Rule~\ref{rule:event}, we update the guard of transition
	$\tran_{\u}^2$ to be $\mathtt{eventA}?$, i.e.,
	$\tran_{\u}^2 = (\state_1, \mathtt{eventA}?, \mathtt{NULL}, \mathtt{NULL}, \state_2)$, which is marked by a red rectangle in Fig.~\ref{fig:eventEx}.
	The guard set of timed automaton $\automata_1$ is updated as
	$\guardSet_{\u}^1 = \{ \mathtt{true}, \mathtt{eventA}?, x>0, x>1 \}$.
	Other elements of $\automata_1$ and all elements of $\automata_2$ are the same with Example~\ref{ex:guard}.
	
	Hence, the timed automata model after event transformation is
	$\automataSetStep{4} = \{ \automata_1, \automata_2, \eventAutomata \}$.	
\end{example}

\textbf{Rule~\ref{rule:timer}: Timing Trigger Transformation.}
\yakindu\ statecharts have two types of timing triggers: \textit{every
timing trigger} and \textit{after timing trigger}.
Similar to the event transformation rule, we simulate
the timing trigger's behavior by an auxiliary timing trigger automaton.
The transformed timed automata and the timing trigger automaton are synchronized
through a channel declared for the corresponding timing trigger.
The formalism of the timing trigger transformation is defined in Rule~\ref{rule:timer}.
We use Example~\ref{ex:timer} to explain the timing trigger
transformation rule.

\begin{myrule}[Timing Trigger Transformation Rule]
	\label{rule:timer}
	Given a network of basic \yakindu\ statecharts
	$\statechartSet=\{ (\statechart_i, \statechartPriority_i) | \statechart_i\\ = (\stateSet_i, \state^0_i, \tranSet_i, \guardSet_i, \actionSet_i, \varSet, \tranPrioritySet_i, \inStateActionAssign^i, \outStateActionAssign^i) \wedge 1 \le i \le n \}$
	and the network of \uppaal\ timed automata
	$\automataSetStep{4} = \{ \automata_i | \automata_i = (\stateSet_i, \state^0_i, \tranSet_i, \guardSet_i, \actionSet_i, \varSet, \emptyset, \emptyset) \wedge 1 \le i \le n \} \cup \automataSet_{\event}$
	after the \textit{event} transformation,
	let $\timer$ be a timing trigger.
	\begin{itemize}			
		\item For each timing trigger $\timer$ in the format of $\mathtt{every} \ \timer \in \{ \statechart_i(\guardSet_i) | \statechart_i \in \statechartSet \}$,
		\begin{itemize}
			\item declare a clock variable $\mathtt{clock} \ \clock_{\timer} = 0$ and a channel variable $\mathtt{chan} \ \var_{\timer}$; and
			\item create an \textit{every} timing trigger automaton
			$\timerAutomata^{\mathtt{every}} = (\{ \state_0 \}, \state_0,
			\{(\state_0, \var_{\timer}!  \ \&\& \    \clock_{\timer} == \timer, \clock_{\timer}=0, \clock_{\timer}, \state_0)\},
			\{\var_{\timer}!  \ \&\& \   \clock_{\timer}\\==\timer\}, \{\clock_{\timer}=0\}, \varSet \cup \{\var_{\timer}\}, \{\clock_{\timer}\}, \{(\state_0, \clock_{\timer} \le \timer)\})$.
		\end{itemize}
		
		\item For each timing trigger $\timer$ in the format of $\mathtt{after} \ \timer \in \{ \statechart_i(\guardSet_i) | \statechart_i \in \statechartSet \}$,
		\begin{itemize}
			\item declare a clock variable $\mathtt{clock} \ \clock_{{\timer}} = 0$ and a channel variable $\mathtt{chan} \ \var_{{\timer}}$; and
			\item create an \textit{after} timing trigger automaton
			$\timerAutomata^{\mathtt{after}} = (\{ \state_0,\state_1 \}, \state_0,
			\{(\state_0, \var_{\timer}! \ \&\& \   \clock_{\timer}==\timer, \clock_{\timer}=0, \clock_{\timer}, \state_1)\},
			\{\var_{\timer}!  \ \&\& \\   \clock_{\timer}==\timer\}, \{\clock_{\timer}=0\}, \varSet \cup \{\var_{\timer}\}, \{\clock_{\timer}\}, \{(\state_0,\clock_{\timer} \le \timer)\})$.
		\end{itemize}
	
		\item For every guard in $\automataSetStep{4}$, we substitute timing triggers
		$\mathtt{every} \ \timer$ and $\mathtt{after} \ \timer$ with corresponding $\var_{\timer}?$, i.e.,
		\begin{align*}
		\begin{split}				
		\forall \automata_i \in \automataSetStep{4}: \forall \guard \in \automata_i(\guardSet_i): \guard[\var_{\timer}?/\mathtt{every} \ \timer][\var_{\timer}?/\mathtt{after} \ \timer].
		\end{split}
		\end{align*}		
	\end{itemize}	
	We use $\automataSetStep{5} = \{ \automata_i | \automata_i = (\stateSet_i, \state^0_i, \tranSet_i, \guardSet_i[\var_{\timer}?/\mathtt{every} \ \timer][\var_{\timer}?/\mathtt{after} \ \timer], \actionSet_i, \varSet \cup \{ \var_{\timer} | \timer \in \statechart_i(\guardSet_i) \wedge \statechart_i \in \statechartSet \} \}, \{ \clock_{\timer} | \timer \in \statechart_i(\guardSet_i) \wedge \statechart_i \in \statechartSet \}, \emptyset) \wedge 1 \le i \le n \} \cup \automataSet_{\event} \cup \automataSet_{\timer}$
	to denote the network of \uppaal\ timed automata after timing trigger transformation,
	where $\automataSet_{\timer} = \{ \timerAutomata | \timer \in \automataSetStep{4}(\guardSet) \}$.
\end{myrule}

\begin{example}
	\label{ex:timer}
	We transform timing triggers in the statechart model $\statechartSet$ specified in Example~\ref{ex:stEx}
	and update the timed automata model $\automataSetStep{4}$ transformed in Example~\ref{ex:event}.
	Fig.~\ref{fig:timerEx} depicts the timed automata model after timing trigger transformation.
	\begin{figure}[ht]
		\centering
		\includegraphics[width = 0.7\textwidth]{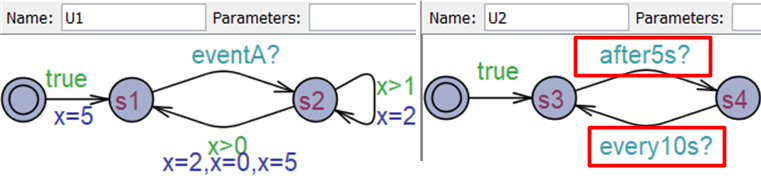}
		\caption{Timed Automata Model after Timing Trigger Transformation}
		\label{fig:timerEx}
	\end{figure}

	The statechart model $\statechartSet$ contains two timing triggers
	$\mathtt{every \ 10s}$ and $\mathtt{after \ 5s}$.
	We take the timing trigger $\mathtt{every \ 10s}$ as an example to illustrate
	Rule~\ref{rule:timer}. We first declare a clock variable $\mathtt{clock} \ \clock_1 = 0$
	and a channel variable $\mathtt{chan \ every10s}$
	for the timing trigger. Then, we add the clock variable $\clock_1$ and the channel variable $\mathtt{every10s}$ into the clock set $\clockSet$
	and the variable set $\varSet_{\u}$ of timed automata model $\automataSetStep{4}$, respectively, i.e.,
	$\clockSet = \{ \clock_1 \}$ and
	$\varSet_{\u} = \{ x, \mathtt{eventA}, \mathtt{every10s} \}$.
	Lastly, we create an every timing trigger automaton
	$\timerAutomata^{\mathtt{every}}$ to simulate the behavior
	of timing trigger $\mathtt{every \ 10s}$, as shown in Fig.~\ref{subfig:everyTimerAutomataEx}.	
	The timing trigger automaton $\timerAutomata^{\mathtt{every}}$ contains only
	one location $\state_0^{\mathtt{every}}$ with invariant $\clock_1 \le 10$
	to guarantee that the timing trigger automaton is not allowed to stay in the state $\state_0^{\mathtt{every}}$ for more than 10 seconds.
	The automaton $\timerAutomata^{\mathtt{every}}$ also has a self-loop
	edge which is guarded by $\clock_1==10$, resets the clock $\clock_1=0$,
	and synchronizes with timed automata in $\automataSetStep{4}$
	through the output channel $\mathtt{every10s} !$.	 
	The formal representation of the timing trigger automaton is
	$\timerAutomata^{\mathtt{every}} = (\{ \state_0^{\mathtt{every}} \}, \state_0^{\mathtt{every}},
	\{(\state_0^{\mathtt{every}}, \mathtt{every10s}! \ \&\& \  \clock_1 == 10, \clock_1=0, \{\clock_1\}, \state_0^{\mathtt{every}})\},
	\{\mathtt{every10s}! \ \&\& \  \clock_1==10\}, \clock_1=0, \varSet_{\u}, \clockSet, \{(\state_0^{\mathtt{every}}, \clock_1 \le 10)\})$.	
	\begin{figure}[ht]
		\centering
		\subfigure[Every Timing Trigger]{
			\centering
			\includegraphics[width = 0.3\textwidth]{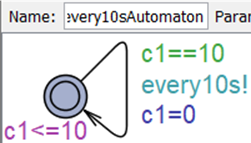}
			\label{subfig:everyTimerAutomataEx}
		}
		\subfigure[After Timing Trigger]{
			\centering
			\includegraphics[width = 0.3\textwidth]{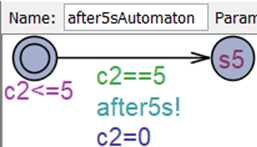}
			\label{subfig:afterTimerAutomataEx}
		}
		\caption{Timing Trigger Automaton}
		\label{fig:timerAutomataEx}
	\end{figure}

	Similarly, we transform the timing trigger $\mathtt{after \ 5s}$.
	We declare a clock variable $\mathtt{clock} \ \clock_2 = 0$
	and a channel variable $\mathtt{chan \ after5s}$
	and add them into the clock set $\clockSet$ and the variable set $\varSet_{\u}$, respectively, i.e.,
	$\clockSet = \{ \clock_1, \clock_2 \}$ and
	$\varSet_{\u} = \{ x, \mathtt{eventA}, \mathtt{every10s}, \mathtt{after5s} \}$. Instead of the self-loop edge in automaton $\timerAutomata^{\mathtt{every}}$,
	the after timing trigger automaton $\timerAutomata^{\mathtt{after}}$
	transits the initial state $\state_0^{\mathtt{after}}$
	to another state $\state_5$, as shown in Fig.~\ref{subfig:afterTimerAutomataEx}.
	The formal representation of the after timing trigger automaton is
	$\timerAutomata^{\mathtt{after}} = (\{ \state_0^{\mathtt{after}},\state_5 \}, \state_0^{\mathtt{after}},
	\{(\state_0^{\mathtt{after}}, \mathtt{after5s}! \ \&\& \  \clock_2==5, \clock_2=0, \clock_2, \state_5)\},
	\{\mathtt{after5s}! \ \&\& \  \clock_2==5\}, \{\clock_2=0\}, \varSet_{\u}, \clockSet, \{(\state_0^{\mathtt{after}},\clock_2 \le 5)\})$.
	
	In the timed automata model $\automataSetStep{4}$, the transitions
	$\tran_{\u}^6 = (\state_3, \mathtt{after \ 5s}, \mathtt{NULL}, \mathtt{NULL}, \state_4)$ and
	$\tran_{\u}^7 = (\state_4, \mathtt{every \ 10s},\\ \mathtt{NULL}, \mathtt{NULL}, \state_3)$
	are triggered by the after and every timing triggers, respectively.
	According to Rule~\ref{rule:timer}, we update the guards of transitions
	$\tran_{\u}^6$ and $\tran_{\u}^7$ to be
	$\mathtt{after5s}?$ and $\mathtt{every10s}?$, respectively, i.e.,
	$\tran_{\u}^6 = (\state_3, \mathtt{after5s}?, \mathtt{NULL}, \mathtt{NULL}, \state_4)$ and
	$\tran_{\u}^7 = (\state_4, \mathtt{every10s}?, \mathtt{NULL}, \mathtt{NULL}, \state_3)$, which are marked by red rectangles in Fig.~\ref{fig:timerEx}.	
	The guard set of timed automaton $\automata_2$ is updated as
	$\guardSet_{\u}^2 = \{ \mathtt{true}, \mathtt{after5s}?,\mathtt{every10s}? \}$.	
	Other elements of $\automata_2$ and all elements of $\automata_1$ are the same with Example~\ref{ex:event}.
	
	Hence, the timed automata model after event transformation is
	$\automataSetStep{5} = \{ \automata_1, \automata_2, \eventAutomata, \timerAutomata^{\mathtt{every}}, \timerAutomata^{\mathtt{after}} \}$.		
\end{example}

\subsubsection{Model Determinism and Synchrony in \uppaal}
\label{subsubsec:synchronous}
\text{}

As mentioned in Section~\ref{sec:formalism},
\yakindu\ statecharts have deterministic and synchronous execution
semantics~\cite{yakinduDoc}, while \uppaal\ timed automata's execution
semantics is non-deterministic and asynchronous~\cite{behrmann2004tutorial}.
To maintain the semantics equivalence
between the \yakindu\ statecharts and the transformed \uppaal\ timed
automata, we need to model determinism and synchrony in \uppaal.

To implement the determinism within a statechart, \yakindu\
assigns an unique priority to each transition and select the
transition with the highest priority from all enabled outgoing
transitions of the same state to execute.
We enforce the transition priorities in \uppaal\ by Rule~\ref{rule:priority}
given below.

\textbf{Rule~\ref{rule:priority}: Transition Priority.}
Assume a state has $n$ outgoing transitions $\{\tran_1, \tran_2, \dots, \tran_n\}$ sorted
in decreasing priority order (i.e., the transition $\tran_1$ has the highest priority)
and the guard of transition $\tran_i$ is
denoted as $\guard_i$. To consider transition priorities, we change the
transition guard of $\tran_i$ to be $\guard_i \ \&\& \  !\guard_1 \ \&\& \  !\guard_2 \
\ \&\& \  \ \dots \ \&\& \  !\guard_{i-1}$ to enforce that higher priority transitions take place
before lower priority transitions.
The formalism of the transition priority transformation is defined in Rule~\ref{rule:priority}.
We use Example~\ref{ex:priority} to explain the transition priority
transformation rule.

\begin{myrule}[Transition Priority Rule]
	\label{rule:priority}	
	Given a network of basic \yakindu\ statecharts
	$\statechartSet=\{ (\statechart_i, \statechartPriority_i) | \statechart_i = (\stateSet_i, \state^0_i, \tranSet_i, \guardSet_i, \actionSet_i, \varSet, \tranPrioritySet_i, \inStateActionAssign^i, \outStateActionAssign^i) \wedge 1 \le i \le n \}$
	and the network of \uppaal\ timed automata
	$\automataSetStep{5} = \{ \automata_i | \automata_i = (\stateSet_i, \state^0_i, \tranSet_i, \guardSet_i, \actionSet_i, \varSet, \clockSet, \emptyset
	) \wedge 1 \le i \le n \} \cup \eventAutomataSet \cup \timerAutomataSet$
	after timing trigger transformation,
	let $\tran_{\y}$ denote a transition in statechart $\statechart_i$ and $\tran_{\u}$ be $\tran_{\y}$'s
	corresponding edge in time automata $\automata_i$,
	let $\outTranSet_{\y}(\state)=\{ \outTran_{\y} | \outTran_{\y} \in \statechart_i(\tranSet_i) \wedge \outTran_{\y}=(\state, *, *, *, *) \}$
	denote the outgoing transitions of state $\state$ in statechart $\statechart_i$,
	and let $\highPriTranSet_{\y}(\state, \tran_{\y})=\{ \outTran_{\y} | \outTran_{\y} \in \outTranSet_{\y}(\state) \wedge \tran_{\y}(\tranPriority) > \outTran_{\y}(\tranPriority) \}$ denote the outgoing transitions of state $\state$ with higher priorities than transition $\tran_{\y}$,
	for each edge $\tran_{\u}$ in the network of \uppaal\ timed automata $\automataSetStep{5}$,
	we modify the guard of $\tran_{\u}$ by adding conjunctions of guard negation of
	higher priority outgoing transitions from $\tran_{\y}$'s source state, i.e.,
	\begin{align*}
	\begin{split}	
	\forall \statechart_i \in \statechartSet: \forall \state \in \statechart_i(\stateSet_i):
	\forall \tran_{\y} \in \outTranSet_{\y}(\state): \tran_{\u}(\guard) := \tran_{\u}(\guard) \bigwedge\limits_{j=1}^k \neg \tran_{\u}^{j}(\guard)	
	\end{split}
	\end{align*}
	where $\tran_{\u}^{j}$ is the corresponding edge of transition 	
	$\highPriTran_{\y} \in \highPriTranSet_{\y}(\state, \tran_{\y})$ and
	$k = | \highPriTranSet_{\y}(\state, \tran_{\y}) |$.
	We use $\automataSetStep{6} = \{ \automata_i | \automata_i = (\stateSet_i, \state^0_i, \tranSet_i, \guardSet_i, \actionSet_i, \varSet, \clockSet, \emptyset) \wedge 1 \le i \le n \} \cup \automataSet_{\event} \cup \automataSet_{\timer}$
	to denote the network of \uppaal\ timed automata after transition priority transformation.
\end{myrule}

\begin{example}
	\label{ex:priority}
	We transform transition priorities in the statechart model $\statechartSet$ specified in Example~\ref{ex:stEx}
	and update the timed automata model $\automataSetStep{5}$ transformed in Example~\ref{ex:timer}.
	Fig.~\ref{fig:priorityEx} depicts the timed automata model after timing trigger transformation.
	\begin{figure}[ht]
		\centering
		\includegraphics[width = 0.7\textwidth]{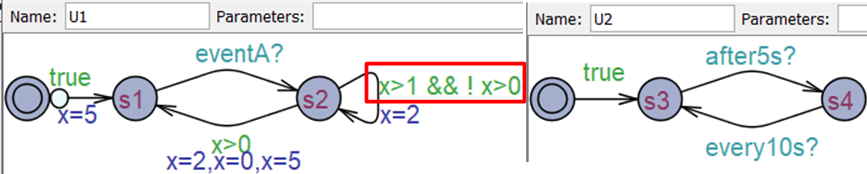}
		\caption{Timed Automata Model after Transition Priority Transformation}
		\label{fig:priorityEx}
	\end{figure}
	
	In the statechart model $\statechartSet$, only the transition
	$\tran_{\y}^4 = (\state_2, x>1, \mathtt{NULL}, 2, \state_2)$
	has a higher priority transition	
	$\tran_{\y}^3 = (\state_2, x>0, x=0, 1, \state_1)$.
	According to Rule~\ref{rule:priority}, we update the guard of transition
	$\tran_{\u}^4$ to be 
	$\tran_{\u}^4(\guard) \wedge \neg \tran_{\u}^3(\guard)$, i.e.,
	$\tran_{\u}^4 = (\state_2, x>1 \ \&\& \  !x>0, x=2, \mathtt{NULL}, \state_2)$,
	which is marked by a red rectangle in Fig.~\ref{fig:priorityEx}.		
	The guard set of timed automaton $\automata_1$ is updated as
	$\guardSet_{\u}^1 = \{ \mathtt{true}, \mathtt{eventA}?, x>0, x>1 \ \&\& \  !x>0 \}$.	
	Other elements of $\automata_1$ and all elements of $\automata_2$ are the same with Example~\ref{ex:timer}.

	Hence, the timed automata model after transition priority transformation is
	$\automataSetStep{6} = \{ \automata_1, \automata_2, \eventAutomata, \timerAutomata^{\mathtt{every}}, \timerAutomata^{\mathtt{after}} \}$.		
\end{example}

\yakindu\ statecharts implement the determinism among different
statecharts by statechart priority and synchronous execution. According
to the decreasing statechart priority, each statechart only executes one
step in each execution cycle if the transition is enabled, otherwise the
statechart stays in the current state. In \uppaal\ timed automata, we
use the lockstep method~\cite{lockstep}, shown in Rule~\ref{rule:synchrony},
to force synchronous execution based on statechart priorities.
We ignore the added event automata (Rule~\ref{rule:event}) and timing trigger
automata  (Rule~\ref{rule:timer}) when modeling synchrony in \uppaal\,
as the added automata does not affect the model's execution behavior.

\textbf{Rule~\ref{rule:synchrony}: Statechart Priority and Synchrony.}
Suppose a \yakindu\ model contains $n$ basic statecharts
$\{ (\statechart_i, \statechartPriority_i) | \statechartPriority_i\\ = i \wedge 1 \le i \le n \}$
that are sorted by statechart priority in decreasing order, i.e.,
the statechart $\statechart_1$ has the highest priority. The corresponding transformed
$n$ time automata are denoted as $\{ \automata_i | 1 \le i \le n \}$.
The lockstep method~\cite{lockstep} to model synchrony is as follows.
We declare a variable $\step$ with initial value $1$ to indicate
the execution index of the transformed timed automata model.
For each edge in timed automata $\automata_i$, we add a
conjunction $\step == \statechartPriority_i$ to its guard to
enforce that only the automaton whose corresponding statechart's execution priority is the same
with the execution index $\step$ is executed.
We also add the execution index update action $\increaseStep \equiv (\step+1) \mod n$
to every action in the transformed timed automata model
to enforce that every automaton only executes one step
in each execution cycle.

If none of outgoing edges is enabled, the
automaton stays in current state, which is equivalent to that the
automaton executes a self-loop step without any actions. To avoid deadlock,
for each location, we add a self-loop edge which is guarded by the negation of all
outgoing edge guards of the location and does not have any actions. If a location does not
have outgoing edges, the added self-loop edge is guarded
by $\mathtt{true}$. The added self-loop edge has the lowest
priority among all outgoing edges of the corresponding location.
We also apply the above lockstep procedure to these added self-loop edges.
The formalism of the statechart priority and synchrony transformation is defined in Rule~\ref{rule:synchrony}.
We use Example~\ref{ex:synchrony} to explain the statechart priority and synchrony
transformation rule.

\begin{myrule}[Statechart Priority and Synchrony Rule]
	\label{rule:synchrony}
	Given a network of basic \yakindu\ statecharts
	$\statechartSet=\{ (\statechart_i, \statechartPriority_i) | \statechart_i = (\stateSet_i, \state^0_i, \tranSet_i, \guardSet_i, \actionSet_i, \varSet, \tranPrioritySet_i, \inStateActionAssign^i, \outStateActionAssign^i) \wedge \statechartPriority_i = i \wedge 1 \le i \le n \}$
	and the network of \uppaal\ timed automata
	$\automataSetStep{6} = \{ \automata_i | \automata_i = (\stateSet_i, \state^0_i, \tranSet_i, \guardSet_i, \actionSet_i, \varSet, \clockSet, \emptyset) \wedge 1 \le i \le n \} \cup \eventAutomataSet \cup \timerAutomataSet$
	after transition priority transformation,	
	let $\outTranSet(\state)=\{ \outTran | \outTran \in \automata_i(\tranSet_i) \wedge \outTran=(\state, *, *, *, *) \}$
	denote the outgoing edges of location $\state$ in timed automata $\automata_i$,	
	let $<\action_1; \action_2>$ denote that the two actions $\action_1$ and $\action_2$ are executed sequentially and atomically,
	we
	\begin{itemize}
		\item declare an integer $\mathtt{int} \ \step = 1$ to indicate the index of the automaton to be executed;
		\item define an action $\increaseStep \equiv (\step+1) \mod n$ to update the automaton execution index $\step$;
		\item for each location in the network of \uppaal\ timed automata $\automataSetStep{6}$,
		add a self-loop edge which is guarded by the negation of all outgoing edge guards of
		the corresponding location and does not have any actions, i.e.,
		$\forall \automata_i \in \automataSetStep{6} :\ \forall \state \in \automata_i(\stateSet_i):\ \automata_i(\tranSet_i) := \automata_i(\tranSet_i) \cup \tran=(\state, \guard, \mathtt{NULL}, \mathtt{NULL}, \state)$,		
		where $\forall \outTran \in \outTranSet (\state): \guard := \mathtt{true} \bigwedge \neg \outTran(\guard)$;	
		\item add a conjunction $\step == \statechartPriority_i$ to every edge's guard in the network of \uppaal\ timed automata $\automataSetStep{6}$, i.e.,		
		$\forall \automata_i \in \automataSetStep{6} :\
		\forall \tran \in \automata_i(\tranSet_i) :\
		\tran(\guard) := \tran(\guard) \ \&\& \  (\step == \statechartPriority_i)$; and	
		\item add the automaton execution index update action $\increaseStep$ to every edge's action in the network of \uppaal\ timed automata $\automataSetStep{6}$, i.e.,
		$\forall \automata_i \in \automataSetStep{6} :\ 
		\forall \tran \in \automata_i(\tranSet_i) :\
		\tran(\action) := <\tran(\action); \increaseStep>$.
	\end{itemize}
	We use $\automataSetStep{7} = \{ \automata_i | \automata_i = (\stateSet_i, \state^0_i, \tranSet_i, \guardSet_i, \actionSet_i, \varSet, \clockSet, \emptyset) \wedge 1 \le i \le n \} \cup \eventAutomataSet \cup \timerAutomataSet$
	to denote the network of \uppaal\ timed automata after statechart priority and synchrony transformation.	
\end{myrule}

\begin{example}
	\label{ex:synchrony}
	We transform statechart priority and model synchrony in the statechart model $\statechartSet$ specified in Example~\ref{ex:stEx}
	and update the timed automata model $\automataSetStep{6}$ transformed in Example~\ref{ex:priority}.
	Fig.~\ref{fig:synchronyEx} depicts the timed automata model after statechart priority and synchrony transformation.
	\begin{figure}[ht]
		\centering
		\includegraphics[width = 0.9\textwidth]{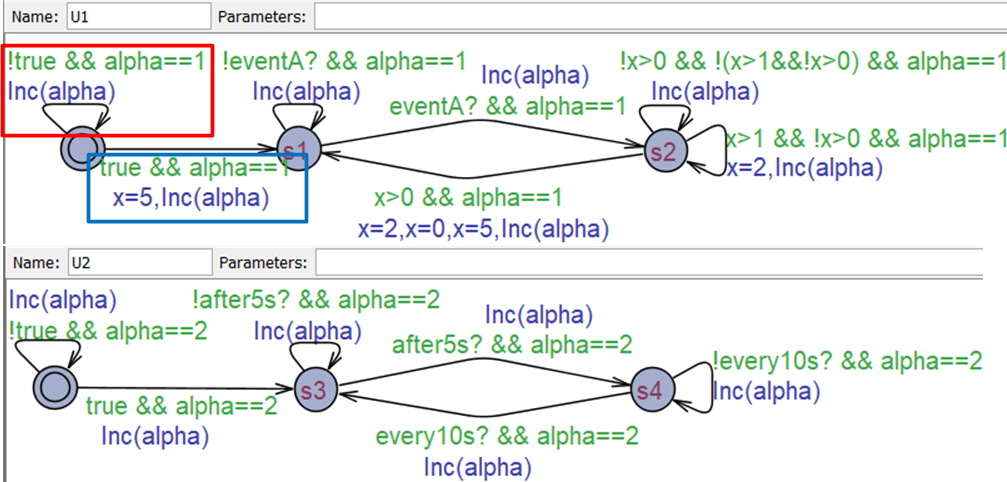}
		\caption{Timed Automata Model after Statechart Priority and Synchrony Transformation}
		\label{fig:synchronyEx}
	\end{figure}

	According to Rule~\ref{rule:synchrony}, we declare an integer $\mathtt{int \ alpha} = 1$ to indicate the automaton execution index	
	and define an action $\mathtt{Inc(alpha)}$ to update $\mathtt{appha}$.
	To avoid deadlock, we add a self-loop edge for each location. We take the
	added self-loop edge $\tran_{\u}^8$ for location $\state_0^1$, which is marked by a red rectangle in Fig.~\ref{fig:synchronyEx},
	as an example to illustrate the rule.
	Before adding edge $\tran_{\u}^8$, the location $\state_0^1$ only has
	one outgoing edge
	$\tran_{\u}^1 = (\state_0^1, \mathtt{true}, x=5, \mathtt{NULL}, \state_1)$.
	We set the guard of $\tran_{\u}^8$ to be $\neg \tran_{\u}^1(\guard)$, i.e.,
	$\tran_{\u}^8 = (\state_0^1, !\mathtt{true}, \mathtt{NULL}, \mathtt{NULL}, \state_0^1)$.
	Similarly, we add self-loop edges  
	$\tran_{\u}^9 = (\state_1, !\mathtt{eventA}?, \mathtt{NULL}, \mathtt{NULL}, \state_1)$,	
	$\tran_{\u}^{10} = (\state_2, !x>0 \ \&\& \  !(x>1 \ \&\& \  !x>0), \mathtt{NULL}, \mathtt{NULL}, \state_2)$,	
	$\tran_{\u}^{11} = (\state_0^2, !\mathtt{true}, \mathtt{NULL}, \mathtt{NULL}, \state_0^2)$,	
	$\tran_{\u}^{12} = (\state_3, !\mathtt{after5s}?, \mathtt{NULL}, \mathtt{NULL}, \state_3)$, and 	
	$\tran_{\u}^{13} = (\state_4, !\mathtt{every10s}?, \mathtt{NULL}, \mathtt{NULL}, \state_4)$	
	for locations $\state_1$, $\state_2$, $\state_0^2$, $\state_3$, and $\state_4$, respectively.
	The edge sets of timed automata $\automata_1$ and $\automata_2$ are updated as
	$\tranSet_{\u}^1 = \{ \tran_{\u}^1,\tran_{\u}^2,\tran_{\u}^3,\tran_{\u}^4,\tran_{\u}^8,\tran_{\u}^9,\tran_{\u}^{10} \}$ and
	$\tranSet_{\u}^2 = \{ \tran_{\u}^5,\tran_{\u}^6,\tran_{\u}^7,\tran_{\u}^{11},\tran_{\u}^{12},\tran_{\u}^{13} \}$, respectively.
	The guard sets of timed automata $\automata_1$ and $\automata_2$ are updated as
	$\guardSet_{\u}^1 = \{ \mathtt{true}, \mathtt{eventA}?, x>0, x>1 \ \&\& \  !x>0, !\mathtt{true}, !\mathtt{eventA}?, !x>0 \ \&\& \  !(x>1 \ \&\& \  !x>0) \}$ and
	$\guardSet_{\u}^2 = \{ \mathtt{true}, \mathtt{after5s}?,\mathtt{every10s}?, !\mathtt{true}, !\mathtt{after5s}?, !\mathtt{every10s}? \}$, respectively.
	
	To model synchrony by applying the lockstep method, we add a conjunction $\mathtt{alpha} == \statechartPriority_i$ and
	the action $\mathtt{Inc(alpha)}$ to every edge's guard and action in timed automata
	model $\automataSetStep{6}$, respectively.
	We take the edge $\tran_{\u}^1 = (\state_0^1, \mathtt{true}, x=5, \mathtt{NULL}, \state_1)$,
	which is marked by a blue rectangle in Fig.~\ref{fig:synchronyEx},
	as an example to illustrate the rule.
	The edge $\tran_{\u}^1$'s corresponding transition in the given \yakindu\ model is transition $\tran_{\y}^1$ of
	statechart $\statechart_1$. According to Rule~\ref{rule:synchrony}, we update
	the guard and action of edge $\tran_{\u}^1$ to be $\tran_{\u}^1(\guard) \ \&\& \  \mathtt{alpha} == \statechartPriority_1$
	and $<\tran_{\u}^1(\action); \mathtt{Inc(alpha)}>$, respectively, i.e.,
	$\tran_{\u}^1 = (\state_0^1, \mathtt{true} \ \&\& \  \mathtt{alpha} == 1, <x=5;\mathtt{Inc(alpha)}>, \mathtt{NULL}, \state_1)$.
	Similarly, we update the other twelve edges in timed automata model
	$\automata_1$ and $\automata_2$ as follows:
	$\tran_{\u}^2 = (\state_1, \mathtt{eventA}? \ \&\& \  \mathtt{alpha} == 1, \mathtt{NULL}, \mathtt{NULL}, \state_2)$,	
	$\tran_{\u}^3 = (\state_2, x>0 \ \&\& \  \mathtt{alpha} == 1, <x=2; x=0; x=5; \mathtt{Inc(alpha)}>, \mathtt{NULL}, \state_1)$,	
	$\tran_{\u}^4 = (\state_2, x>1 \ \&\& \  !x>0 \ \&\& \  \mathtt{alpha} == 1, <x=2; \mathtt{Inc(alpha)}>, \mathtt{NULL}, \state_2)$,
	$\tran_{\u}^5 = (\state_0^2, \mathtt{true} \ \&\& \  \mathtt{alpha} == 2, \mathtt{Inc(alpha)}, \mathtt{NULL}, \state_3)$,
	$\tran_{\u}^6 = (\state_3, \mathtt{after5s}? \ \&\& \  \mathtt{alpha} == 2, \mathtt{Inc(alpha)}, \mathtt{NULL}, \state_4)$,	
	$\tran_{\u}^7 = (\state_4, \mathtt{every10s}? \ \&\& \  \mathtt{alpha} == 2, \mathtt{Inc(alpha)}, \mathtt{NULL}, \state_3)$,	
	$\tran_{\u}^8 = (\state_0^1, !\mathtt{true} \ \&\& \  \mathtt{alpha} == 1, \mathtt{Inc(alpha)}, \mathtt{NULL}, \state_0^1)$,	
	$\tran_{\u}^9 = (\state_1, !\mathtt{eventA}? \ \&\& \  \mathtt{alpha} == 1, \mathtt{Inc(alpha)}, \mathtt{NULL}, \state_1)$,		
	$\tran_{\u}^{10} = (\state_2, !x>0 \ \&\& \  !(x>1 \ \&\& \  !x>0) \ \&\& \  \mathtt{alpha} == 1, \mathtt{Inc(alpha)}, \mathtt{NULL}, \state_2)$,	
	$\tran_{\u}^{11} = (\state_0^2, !\mathtt{true} \ \&\& \  \mathtt{alpha} == 2, \mathtt{Inc(alpha)}, \mathtt{NULL}, \state_0^2)$,	
	$\tran_{\u}^{12} = (\state_3, !\mathtt{after5s}? \ \&\& \  \mathtt{alpha} == 2, \mathtt{Inc(alpha)}, \mathtt{NULL}, \state_3)$, and 	
	$\tran_{\u}^{13} = (\state_4, !\mathtt{every10s}? \ \&\& \  \mathtt{alpha} == 2, \mathtt{Inc(alpha)}, \mathtt{NULL},\\ \state_4)$.	
	The guard sets of timed automata $\automata_1$ and $\automata_2$ are updated as
	$\guardSet_{\u}^1 = \{ \mathtt{true}\ \&\& \  \mathtt{alpha} == 1, \mathtt{eventA}?\ \&\& \  \mathtt{alpha} == 1, x>0, x>1 \ \&\& \  !x>0\ \&\& \  \mathtt{alpha} == 1, !\mathtt{true}\ \&\& \  \mathtt{alpha} == 1, !\mathtt{eventA}?\ \&\& \  \mathtt{alpha} == 1, !x>0 \ \&\& \  !(x>1 \ \&\& \  !x>0)\ \&\& \  \mathtt{alpha} == 1 \}$ and 
	$\guardSet_{\u}^2 = \{ \mathtt{true}\ \&\& \  \mathtt{alpha} == 2, \mathtt{after5s}?\ \&\& \  \mathtt{alpha} == 2,\mathtt{every10s}?\ \&\& \  \mathtt{alpha} == 2, !\mathtt{true}\ \&\& \  \mathtt{alpha} == 2, !\mathtt{after5s}?\ \&\& \  \mathtt{alpha} == 2, !\mathtt{every10s}?\ \&\& \  \mathtt{alpha} == 2 \}$, respectively.
	The action sets of timed automata $\automata_1$ and $\automata_2$ are updated as
	$\actionSet_{\u}^1 = \{ <x=2; x=0; x=5;\mathtt{Inc(alpha)}>, <x=5;\mathtt{Inc(alpha)}>, <x=2; \mathtt{Inc(alpha)}>, \mathtt{Inc(alpha)} \}$ and
	$\actionSet_{\u}^2 = \{ \mathtt{Inc(alpha)} \}$.
	Other elements of timed automata $\automata_1$ and $\automata_2$ are the same with Example~\ref{ex:priority}.

	Hence, the timed automata model after statechart priority and synchrony transformation is
	$\automataSetStep{7} = \{ \automata_1, \automata_2, \eventAutomata,\\ \timerAutomata^{\mathtt{every}}, \timerAutomata^{\mathtt{after}} \}$.		
\end{example}

\section{Execution Semantic Equivalence between Statecharts and Transformed Timed Automata}
\label{sec:correctness}
In this section, we prove that the defined transformation
rules in Section~\ref{subsec:y2u-rule} maintain execution
semantics equivalences between statecharts and
transformed timed automata. 
To do so, we first give the model execution semantics equivalence
definitions. Based on the definition, we then formally prove that
the transformation rules maintain the execution semantics equivalence.

\subsection{Definition of Execution Semantics Equivalence}
\label{subsec:equivalence}
The execution semantics equivalence means that
two models have the same execution behaviors under the same environment.
The formal definitions of \textit{execution trace}, \textit{execution trace equivalence}, and
\textit{model equivalence} are given below.

\begin{definition}[Execution Trace]
	\label{def:trace}
	Given a network of basic \yakindu\ statecharts
	$\statechartSet=\{ (\statechart_i, \statechartPriority_i) | \statechart_i = (\stateSet_i, \state^0_i, \tranSet_i, \guardSet_i, \actionSet_i,\\ \varSet, \tranPrioritySet_i, \inStateActionAssign^i, \outStateActionAssign^i) \wedge 1 \le i \le n \}$
	(or a network of \uppaal\ timed automata
	$\automataSet = \{ \automata_i | \automata_i = (\stateSet_i, \state^0_i, \tranSet_i, \guardSet_i, \actionSet_i, \varSet, \clockSet, \invariant_i) \wedge 1 \le i \le n \}$),
	the execution trace $\trace$ is defined as a sequence of consecutive system statuses
	$\sysstate_0 \rightarrow \sysstate_1 \rightarrow \dots \rightarrow \sysstate_j \rightarrow \dots$,
	where $\sysstate =(\overline{\state}, \valueFun{\varSet})$ denotes the system status.
\end{definition}

\begin{definition}[Execution Trace Equivalence]
	\label{def:traceEq}
	Given two execution traces $\trace_1 = \sysstate_0^1 \rightarrow \sysstate_1^1 \rightarrow \dots \rightarrow \sysstate_i^1 \rightarrow \dots \rightarrow \sysstate_n^1$ and $\trace_2 = \sysstate_0^2 \rightarrow \sysstate_1^2 \rightarrow \dots \rightarrow \sysstate_j^2 \rightarrow \dots \rightarrow \sysstate_m^2$,
	the two execution traces $\trace_1$ and $\trace_2$ are equivalent, denoted as $\trace_1 = \trace_2$,
	if $n=m \wedge (\forall i,j : i=j \implies \sysstate_i^1 = \sysstate_j^2)$,
	where $\sysstate_1(\overline{\state}_1, \indexValueFun{1}{\varSet}) = \sysstate_2(\overline{\state}_2, \indexValueFun{2}{\varSet}) \equiv \overline{\state}_1=\overline{\state}_2 \wedge \indexValueFun{1}{\varSet}=\indexValueFun{2}{\varSet}$.
\end{definition}

\begin{definition}[Model Equivalence]
	\label{def:equivalence}
	Given a network of basic \yakindu\ statecharts
	$\statechartSet=\{ (\statechart_i, \statechartPriority_i) | \statechart_i = (\stateSet_i, \state^0_i, \tranSet_i, \guardSet_i,\\ \actionSet_i, \varSet, \tranPrioritySet_i, \inStateActionAssign^i, \outStateActionAssign^i) \wedge 1 \le i \le n \}$
	and a network of \uppaal\ timed automata
	$\automataSet = \{ \automata_i | \automata_i = (\stateSet_i, \state^0_i, \tranSet_i, \guardSet_i, \actionSet_i, \varSet, \clockSet, \invariant_i)\\ \wedge 1 \le i \le n \}$,
	the two models $\statechartSet$ and $\automataSet$
	are equivalent if they have the equivalent execution trace under the same initial system status, i.e.,
	$\trace_{\statechart} = \trace_{\automata}$ if $\sysstate_{\statechart}^0 = \sysstate_{\automata}^0$,
	where $\trace_{\statechart}$ and $\trace_{\automata}$ are the execution traces of the given \yakindu\ model
	and \uppaal\ model, respectively, and $\sysstate_{\statechart}^0$ and $\sysstate_{\automata}^0$ are the initial
	system status of the given \yakindu\ model and \uppaal\ model, respectively.
\end{definition}

\subsection{Proof of Execution Semantics Equivalence}
\label{subsec:proof}
Our proof strategy is:
(1) prove that the transformation rules maintain one-to-one mappings
of statecharts and states (Lemma~\ref{lm:stMap});
(2) prove that the transformation rules maintain injective non-surjective
mappings of transitions and variables (Lemma~\ref{lm:tranMap});
(3) prove that transformation Rules~\ref{rule:action}-\ref{rule:synchrony}
maintain model equivalence by proving the execution semantics of
\yakindu\ statecharts and \uppaal\ timed automata transformed
by each rule are equivalent (Lemma~\ref{lm:action} to Lemma~\ref{lm:synchrony});
(4) prove that \yakindu\ models with only one statechart are equivalent with
the corresponding transformed \uppaal\ models in Theorem~\ref{thm:singlest}
using induction approach;
and (5) extend Theorem~\ref{thm:singlest} to \yakindu\ models with multiple
statecharts in Theorem~\ref{thm:mulst}.

For the seven transformation rules given in Section~\ref{subsec:y2u-rule},
Rule~\ref{rule:event} and Rule~\ref{rule:timer} add
auxiliary event automata and timing trigger automata to the \uppaal\ model
to simulate the event occurrence and timing trigger, respectively. 
According to the model equivalence definition, 
the auxiliary automata do not affect the model's execution behaviors.
Unless otherwise stated, in the rest of this section when we use the term
\uppaal\ model, we will ignore the auxiliary event automata and
timing trigger automata.

\begin{lemma}
	\label{lm:stMap}
	Given a \yakindu\ model
	$\statechartSet=\{ (\statechart_i, \statechartPriority_i) | \statechart_i = (\stateSet_i, \state^0_i, \tranSet_i, \guardSet_i, \actionSet_i, \varSet, \tranPrioritySet_i, \inStateActionAssign^i, \outStateActionAssign^i) \wedge 1 \le i \le n \}$
	and its transformed \uppaal\ model
	$\automataSet = \{ \automata_i | \automata_i = (\stateSet_i, \state^0_i, \tranSet_i, \guardSet_i, \actionSet_i, \varSet, \clockSet, \invariant_i) \wedge 1 \le i \le n \}$,
	the mappings of statecharts and states from the \yakindu\ model to the \uppaal\ model are bijective, i.e.,	
	the mappings from $\{ \statechart_i | \statechart_i \in \statechartSet \}$ and $\{ \stateSet_i | \stateSet_i \in \statechart_i \wedge \statechart_i \in \statechartSet \}$
	to $\{ \automata_i | \automata_i \in \automataSet \}$ and $\{ \stateSet_i | \stateSet_i \in \automata_i \wedge \automata_i \in \automataSet \}$
	are bijective, respectively.
\end{lemma}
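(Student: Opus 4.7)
The plan is to establish the two bijections by tracing their origin to Rule~\ref{rule:initial} and then showing by inspection that no subsequent transformation rule disturbs them. First, I would invoke Rule~\ref{rule:initial} directly: by construction, for each basic \yakindu\ statechart $\statechart_i \in \statechartSet$ the rule creates exactly one \uppaal\ timed automaton $\automata_i$ whose location set is $\statechart_i(\stateSet_i)$ taken as a one-to-one syntactic copy. This yields candidate maps $\phi: \statechart_i \mapsto \automata_i$ between the statecharts and the non-auxiliary automata of $\automataSetStep{1}$, and, statechart-by-statechart, an identity map $\psi_i$ between the state set of $\statechart_i$ and the location set of $\automata_i$. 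Both $\phi$ and each $\psi_i$ are bijections immediately after Rule~\ref{rule:initial} by the explicit form of $\automataSetStep{1}$.

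Second, I would step through Rules~\ref{rule:action}--\ref{rule:synchrony} in order and observe that, restricted to the non-auxiliary part of the network, each rule only rewrites components of \emph{existing} automata rather than adding or removing automata, states, or locations. Concretely: Rules~\ref{rule:action} and~\ref{rule:guard} only populate the action and guard fields of existing edges; Rule~\ref{rule:event} and Rule~\ref{rule:timer} introduce new automata, but only into the auxiliary sets $\automataSet_{\event}$ and $\automataSet_{\timer}$, which the paragraph immediately preceding the lemma explicitly excludes from the term ``\uppaal\ model''; Rule~\ref{rule:priority} only strengthens guards by conjunction; and Rule~\ref{rule:synchrony}, although it adds self-loop edges and updates guards and actions, adds no new location (a self-loop by definition has the same source and target location that already exists). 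Hence the pair $(\phi,\{\psi_i\})$ is preserved step-by-step along the pipeline $\automataSetStep{1} \to \automataSetStep{2} \to \cdots \to \automataSetStep{7}$.

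Third, to conclude I would collect these observations: $\phi$ remains a bijection because the set of non-auxiliary automata in $\automataSet$ has cardinality $n = |\statechartSet|$ with one automaton per statechart, and each $\psi_i$ remains a bijection because the location set of $\automata_i$ is, by the composition of rules, literally the state set of $\statechart_i$. Assembling the $\psi_i$ over $i$ gives the claimed state-level bijection.

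I do not expect a substantive technical obstacle; the lemma is a bookkeeping statement about the syntactic effect of the rules. The only subtlety is making precise the convention, stated informally before the lemma, that ``\uppaal\ model'' here means $\automataSet \setminus (\automataSet_{\event} \cup \automataSet_{\timer})$; without this convention the cardinalities would not match and $\phi$ would fail to be surjective. I would state this convention explicitly at the start of the proof and then proceed with the rule-by-rule verification above.
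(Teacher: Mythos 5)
Your proposal is correct and follows essentially the same route as the paper's proof: Rule~\ref{rule:initial} gives the one-to-one correspondence of statecharts/states to automata/locations, and a rule-by-rule inspection of Rules~\ref{rule:action}--\ref{rule:synchrony} shows no automata or locations are ever added (the auxiliary event and timing trigger automata being excluded by the convention stated just before the lemma), which yields surjectivity and hence bijectivity. Your treatment is somewhat more explicit about the exclusion convention and the preservation of the maps along $\automataSetStep{1}\to\cdots\to\automataSetStep{7}$, but the substance matches the paper's argument.
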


\begin{proof}
	According to Rule~\ref{rule:initial}, every statechart and state
	in the given \yakindu\ model are transformed to an unique
	automata and location in the transformed \uppaal\ model, respectively.
	Hence, the mappings from $\{ \statechart_i | \statechart_i \in \statechartSet \}$ and $\{ \stateSet_i | \stateSet_i \in \statechart_i \wedge \statechart_i \in \statechartSet \}$
	to $\{ \automata_i | \automata_i \in \automataSet \}$ and $\{ \stateSet_i | \stateSet_i \in \automata_i \wedge \automata_i \in \automataSet \}$
	are injective, respectively.
	
	Based on all transformation rules (Rules~\ref{rule:initial}-\ref{rule:synchrony}),
	no additional automata or locations are added into the transformed
	\uppaal\ model. Hence, the mappings from $\{ \statechart_i | \statechart_i \in \statechartSet \}$ and $\{ \stateSet_i | \stateSet_i \in \statechart_i \wedge \statechart_i \in \statechartSet \}$
	to $\{ \automata_i | \automata_i \in \automataSet \}$ and $\{ \stateSet_i | \stateSet_i \in \automata_i \wedge \automata_i \in \automataSet \}$
	are also surjective, respectively.
	
	Therefore, the mappings from $\{ \statechart_i | \statechart_i \in \statechartSet \}$ and $\{ \stateSet_i | \stateSet_i \in \statechart_i \wedge \statechart_i \in \statechartSet \}$
	to $\{ \automata_i | \automata_i \in \automataSet \}$ and $\{ \stateSet_i | \stateSet_i \in \automata_i \wedge \automata_i \in \automataSet \}$
	are bijective, respectively.
\end{proof}

\begin{lemma}
	\label{lm:tranMap}
	Given a \yakindu\ model
	$\statechartSet=\{ (\statechart_i, \statechartPriority_i) | \statechart_i = (\stateSet_i, \state^0_i, \tranSet_i, \guardSet_i, \actionSet_i, \varSet, \tranPrioritySet_i, \inStateActionAssign^i, \outStateActionAssign^i) \wedge 1 \le i \le n \}$
	and its transformed \uppaal\ model
	$\automataSet = \{ \automata_i | \automata_i = (\stateSet_i, \state^0_i, \tranSet_i, \guardSet_i, \actionSet_i, \varSet, \clockSet, \invariant_i) \wedge 1 \le i \le n \}$,
	the mappings of transitions and variables from the \yakindu\ model
	to the \uppaal\ model is injective but not surjective, respectively, i.e.,	
	the mappings from $\{ \tranSet_i | \tranSet_i \in \statechart_i \wedge \statechart_i \in \statechartSet \}$
	and $\{ \varSet | \varSet \in \statechart_i \wedge \statechart_i \in \statechartSet \}$
	to $\{ \tranSet_i | \tranSet_i \in \automata_i \wedge \automata_i \in \automataSet \}$
	and $\{ \varSet | \varSet \in \automata_i \wedge \automata_i \in \automataSet \}$
	are injective but not surjective, respectively.
\end{lemma}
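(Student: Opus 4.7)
The proof naturally splits into four parts: injectivity for transitions, non-surjectivity for transitions, injectivity for variables, and non-surjectivity for variables. The plan is to walk through Rules~\ref{rule:initial}--\ref{rule:synchrony} in order and track how each rule affects the transition set and variable set of the target \uppaal\ model, similar to the strategy already used in the proof of Lemma~\ref{lm:stMap}.

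For injectivity of transitions, I would start from Rule~\ref{rule:initial}, which creates, for each $\tran_{\y} \in \statechart_i(\tranSet_i)$, exactly one corresponding edge $\tran_{\u}$ in $\automata_i$; this base mapping is injective by construction. I would then verify that none of Rules~\ref{rule:action}--\ref{rule:synchrony} identify two distinct $\tran_{\u}$'s, since each of these rules only rewrites the guard or action of an individual edge in place. The same argument, restricted to the variable set, establishes injectivity for variables: Rule~\ref{rule:initial} copies $\varSet$ verbatim into every $\automata_i$, and no later rule merges or removes an existing variable.

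For non-surjectivity of transitions, the key observation is Rule~\ref{rule:synchrony}, which, for each location $\state$ in every $\automata_i$, adds a fresh self-loop edge $(\state, \guard, \mathtt{NULL}, \mathtt{NULL}, \state)$ guarded by the conjunction of negations of all outgoing-edge guards. These self-loop edges are introduced solely to avoid deadlock in \uppaal\ and have no preimage in $\{\tranSet_i \mid \statechart_i \in \statechartSet\}$; exhibiting any one of them witnesses that the transition mapping is not surjective. For non-surjectivity of variables, Rule~\ref{rule:synchrony} declares the integer execution index $\mathtt{int}\ \step = 1$, which lives in the shared variable set of every non-auxiliary $\automata_i$ but does not correspond to any $\var \in \varSet$ of the source \yakindu\ model; exhibiting $\step$ finishes this part. (If the given \yakindu\ model contains at least one timing trigger, Rule~\ref{rule:timer} additionally contributes fresh channel variables $\var_{\timer}$, providing more witnesses, though a single witness suffices.)

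I do not expect any single step to be genuinely difficult, since each claim reduces to inspecting the formal statement of one transformation rule. The only subtle point, and therefore the main thing to be careful about, is bookkeeping across the sequential application of the rules: I need to confirm that the self-loops and the variable $\step$ introduced in Rule~\ref{rule:synchrony} are indeed present in the final $\automataSetStep{7}$ (which, by the convention stated just before the lemma, is what $\automataSet$ refers to after stripping the auxiliary event and timing trigger automata), and that the earlier rules do not accidentally create any additional new transitions or variables beyond those I have already accounted for. A brief case-by-case pass through Rules~\ref{rule:action}--\ref{rule:timer}, noting in each case whether it introduces new edges or variables into the non-auxiliary automata, should suffice to complete the argument.
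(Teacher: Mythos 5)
Your proposal is correct and follows essentially the same route as the paper's own proof: injectivity comes from the one-to-one creation of edges and variables in Rule~\ref{rule:initial}, and non-surjectivity is witnessed by the self-loop edges and the variables added by Rule~\ref{rule:synchrony} (and, when timing triggers exist, Rule~\ref{rule:timer}). Your extra bookkeeping that Rules~\ref{rule:action}--\ref{rule:timer} only rewrite guards and actions in place is a slightly more explicit version of what the paper leaves implicit, but it is not a different argument.
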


\begin{proof}
	According to Rule~\ref{rule:initial}, every transition and variable
	in the given \yakindu\ model is transformed to an unique
	edge and variable in the transformed \uppaal\ model, respectively.
	Hence, the mappings from
	$\{ \tranSet_i | \tranSet_i \in \statechart_i \wedge \statechart_i \in \statechartSet \}$
	and $\{ \varSet | \varSet \in \statechart_i \wedge \statechart_i \in \statechartSet \}$
	to $\{ \tranSet_i | \tranSet_i \in \automata_i \wedge \automata_i \in \automataSet \}$
	and $\{ \varSet | \varSet \in \automata_i \wedge \automata_i \in \automataSet \}$
	are injective, respectively.
	
	The transformation Rule~\ref{rule:synchrony} adds additional edges
	into the transformed \uppaal\ model. The transformation
	Rule~\ref{rule:timer} and Rule~\ref{rule:synchrony}
	add additional variables into the transformed \uppaal\ model.
	Hence, the mappings from
	$\{ \tranSet_i | \tranSet_i \in \statechart_i \wedge \statechart_i \in \statechartSet \}$
	and $\{ \varSet | \varSet \in \statechart_i \wedge \statechart_i \in \statechartSet \}$
	to $\{ \tranSet_i | \tranSet_i \in \automata_i \wedge \automata_i \in \automataSet \}$
	and $\{ \varSet | \varSet \in \automata_i \wedge \automata_i \in \automataSet \}$
	are not surjective, respectively.
	
	Therefore, the mappings from
	$\{ \tranSet_i | \tranSet_i \in \statechart_i \wedge \statechart_i \in \statechartSet \}$
	and $\{ \varSet | \varSet \in \statechart_i \wedge \statechart_i \in \statechartSet \}$
	to $\{ \tranSet_i | \tranSet_i \in \automata_i \wedge \automata_i \in \automataSet \}$
	and $\{ \varSet | \varSet \in \automata_i \wedge \automata_i \in \automataSet \}$
	are injective but not surjective, respectively.
\end{proof}

\begin{lemma}
	\label{lm:action}
	Given a basic \yakindu\ statechart 
	$\statechart = (\stateSet, \state_0, \tranSet, \guardSet, \actionSet, \varSet, \tranPrioritySet, \inStateActionAssign, \outStateActionAssign)$
	and its transformed \uppaal\ timed automaton
	$\automata = (\stateSet, \state_0, \tranSet, \guardSet, \actionSet, \varSet, \clockSet, \invariant)$
	by applying transformation Rule~\ref{rule:action},
	the statechart $\statechart$ and timed automaton $\automata$
	are equivalent.
\end{lemma}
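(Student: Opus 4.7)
The plan is to establish the equivalence by showing that every execution trace of the statechart $\statechart$ corresponds, step-for-step, to an equivalent execution trace of the transformed timed automaton $\automata$, in the sense of Definitions~\ref{def:trace}--\ref{def:equivalence}. By Lemma~\ref{lm:stMap} applied to the single-statechart setting, the states of $\statechart$ and locations of $\automata$ are in bijective correspondence, and by Lemma~\ref{lm:tranMap} each transition $\tran_{\y} \in \tranSet$ maps injectively to an edge $\tran_{\u}$ in $\automata(\tranSet)$. Hence there is a canonical matching between potential transition steps in the two models, which I would use as the backbone of the argument.

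The central structural observation is that the action update rules agree. By Definition~\ref{def:stSen1}, when the statechart fires transition $\tran = (\state, \guard, \tranAction, \tranPriority, \state')$, the system status updates from $(\state, \valueFun{\varSet})$ to $(\state', \valueFun{\varSet}[<\outAction;\tranAction;\inAction>])$, where $\outAction$ and $\inAction$ are the exit and entry actions of $\state$ and $\state'$ respectively. Rule~\ref{rule:action} assigns to the corresponding edge $\tran_{\u}$ precisely the action $<\inState_{\tran_{\y}}(\outAction); \tran_{\y}(\action); \outState_{\tran_{\y}}(\inAction)>$. Invoking Definition~\ref{def:taSen1} for $\automata$ (noting that at this stage $\clockSet = \invariant = \emptyset$ so the clock- and invariant-related conjuncts are vacuously satisfied), the single-step effect on the active location and the variable valuation is therefore identical in both models.

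From these two ingredients I would conclude by induction on the trace length. The base case is immediate: the initial system statuses $\sysstate_0^{\statechart} = (\state_0, \indexValueFun{0}{\varSet})$ and $\sysstate_0^{\automata} = (\state_0, \indexValueFun{0}{\varSet})$ coincide because Rule~\ref{rule:initial} preserves the initial state and the variable set $\varSet$ is untouched by Rule~\ref{rule:action}. For the inductive step, assume $\sysstate_k^{\statechart} = \sysstate_k^{\automata}$; then by the canonical transition correspondence any transition fireable in one model corresponds to the same transition in the other, and by the structural observation above the resulting system statuses satisfy $\sysstate_{k+1}^{\statechart} = \sysstate_{k+1}^{\automata}$. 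Hence $\trace_{\statechart} = \trace_{\automata}$ and the two models are equivalent by Definition~\ref{def:equivalence}.

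The main subtlety, and the place I would be most careful, is that the lemma addresses Rule~\ref{rule:action} in isolation: at this stage the transformed $\automata$ has empty guards and lacks the priority/synchrony machinery of later rules, so the question of \emph{which} transition fires is not yet fully constrained on the \uppaal\ side. I would therefore interpret the equivalence claim as asserting that, under the canonical transition correspondence inherited from Rule~\ref{rule:initial}, the action update performed by Rule~\ref{rule:action} faithfully reproduces the atomic execution of exit, transition, and entry actions prescribed by Definition~\ref{def:stSen1}. Framing the induction hypothesis to isolate this action-level correctness, while deferring guard enforcement and determinism to the subsequent lemmas on Rules~\ref{rule:guard}--\ref{rule:synchrony}, is the delicate bookkeeping the proof must get right.
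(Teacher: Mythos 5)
Your proposal is correct and follows essentially the same route as the paper: both rest on the element correspondences of Lemma~\ref{lm:stMap} and Lemma~\ref{lm:tranMap}, both observe that the composed action $<\inState_{\tran_{\y}}(\outAction); \tran_{\y}(\action); \outState_{\tran_{\y}}(\inAction)>$ installed by Rule~\ref{rule:action} reproduces exactly the update $\valueFun{\varSet}[<\outAction;\tranAction;\inAction>]$ prescribed by Definition~\ref{def:stSen1}, and both conclude trace equivalence in the sense of Definition~\ref{def:equivalence}; your explicit induction on trace length merely spells out the paper's implicit step from ``equivalent one-step semantics'' to ``equivalent traces.'' The one point of divergence is how the two arguments handle the fact that, after Rule~\ref{rule:action} alone, the \uppaal\ side has no guards or priorities to determine which edge fires: the paper resolves this by assuming in its proof that every state has a single outgoing transition guarded by $\mathtt{true}$, so that both semantic relations collapse to formulas~\eqref{eq:actionY} and~\eqref{eq:actionU} and the selection question disappears, with non-trivial guards, events, timing triggers, and multiple outgoing transitions deferred to Lemmas~\ref{lm:guard}--\ref{lm:timer} and Lemma~\ref{lm:priority}; you instead keep the general setting and weaken the reading of the lemma to action-level correctness under the canonical edge correspondence, deferring guard enforcement and determinism to the later rules. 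The paper's explicit restriction buys a literal status-by-status trace equality at the cost of a narrower hypothesis, while your reading is more candid about what Rule~\ref{rule:action} in isolation can guarantee; adopting the paper's stated assumption at the start of your argument would close the interpretive gap you flag without changing the substance of your proof.
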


\begin{proof}
	We assume that each state in the given \yakindu\ statechart $\statechart$
	only has one outgoing transition guarded by $\mathtt{true}$.
	The other two cases with non-$\mathtt{true}$ guards
	and multiple outgoing transitions will be proven in
	Lemmas~\ref{lm:guard}-\ref{lm:timer} and Lemma~\ref{lm:priority}, respectively.
	
	With the above assumptions, the semantics transition of the
	given \yakindu\ statechart $\statechart$, i.e.,
	formula~\eqref{eq:Ysenmantics1}, is simplified by substituting
	guards with $\mathtt{true}$ and omitting transition priority
	constraints as
	\begin{align}
	\label{eq:actionY}
	\begin{split}
	&(\state, \valueFun{\varSet}) \xrightarrow{\mathtt{true},<\outAction;\tranAction;\inAction>} (\state', \valueFun{\varSet}[<\outAction;\tranAction;\inAction>]) \\
	& \qquad \mathtt{ if } \ \exists (\state, \mathtt{ture}, \tranAction, \state') \in \tranSet :
	\valueFun{\varSet} \models \mathtt{ture}.
	\end{split}	
	\end{align}
	
	According to Lemma~\ref{lm:stMap}, the given \yakindu\ statechart $\statechart$
	has a unique transformed \uppaal\ timed automaton $\automata$.
	The semantics transition of the transformed \uppaal\ timed
	automaton $\automata$, i.e., formula~\eqref{eq:Usenmantics2},
	is simplified by applying transformation Rule~\ref{rule:action}
	and omitting clock constraints as
	\begin{align}
	\label{eq:actionU}
	\begin{split}
	&(\state, \valueFun{\varSet}) \xrightarrow{\mathtt{ture}, \action_{\u}} (\state', \valueFun{\varSet}[\action_{\u}]) \\
	&\qquad \mathtt{ if } \ \exists  (\state, \mathtt{ture}, \action_{\u}, \state') \in \tranSet : \valueFun{\varSet} \models \mathtt{ture},
	\end{split}	
	\end{align}	
	where $\action_{\u} = <\inState_{\tran_{\y}}(\outAction); \tran_{\y}(\action); \outState_{\tran_{\y}}(\inAction)>$.
	
	Based on Lemma~\ref{lm:stMap} and Lemma~\ref{lm:tranMap}, every state, transition, and variable in $\statechart$
	have an unique corresponding location, edge, and variable in $\automata$.
	According to the transformation Rule~\ref{rule:action} and
	the assumptions, the actions $<\outAction;\tranAction;\inAction>$ and
	$<\inState_{\tran_{\y}}(\outAction); \tran_{\y}(\action); \outState_{\tran_{\y}}(\inAction)>$ are equivalent.
	Hence, according to Rule~\ref{rule:action}, the semantics
	transitions of $\statechart$ and $\automata$, i.e.,
	formula~\eqref{eq:actionY} and formula~\eqref{eq:actionU},
	are equivalent. Therefore, $\statechart$ and $\automata$ have
	the equivalent execution traces under the same initial system
	statuses, which means $\statechart$ and $\automata$ are equivalent.
\end{proof}

\begin{lemma}
	\label{lm:guard}
	Given a basic \yakindu\ statechart 
	$\statechart = (\stateSet, \state_0, \tranSet, \guardSet, \actionSet, \varSet, \tranPrioritySet, \inStateActionAssign, \outStateActionAssign)$
	and its transformed \uppaal\ timed automaton
	$\automata = (\stateSet, \state_0, \tranSet, \guardSet, \actionSet, \varSet, \clockSet, \invariant)$
	by applying transformation Rule~\ref{rule:guard},
	the statechart $\statechart$ and timed automaton $\automata$
	are equivalent.
\end{lemma}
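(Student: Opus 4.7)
The plan is to mirror the proof strategy of Lemma~\ref{lm:action} verbatim, adapting the specialization of the semantic transition formulas to isolate the effect of Rule~\ref{rule:guard}. First I would make the working assumption that each state in $\statechart$ has a single outgoing transition whose action has already been normalized by Rule~\ref{rule:action}; the cases of multiple outgoing transitions will be covered by Lemma~\ref{lm:priority}, and the special guard forms containing events or timing triggers will be handled by the forthcoming lemmas for Rules~\ref{rule:event} and~\ref{rule:timer}. Under these assumptions, formula~\eqref{eq:Ysenmantics1} specializes to a transition
\begin{align*}
(\state, \valueFun{\varSet}) \xrightarrow{\guard,\action_{\y}} (\state', \valueFun{\varSet}[\action_{\y}]) \quad \mathtt{if}\ \exists (\state, \guard, \action_{\y}, *, \state') \in \tranSet : \valueFun{\varSet} \models \guard,
\end{align*}
where $\action_{\y}$ is the atomic action $<\outAction;\tranAction;\inAction>$ assigned by Rule~\ref{rule:action}.

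Next I would apply Rule~\ref{rule:guard}, which sets $\tran_{\u}(\guard) := \tran_{\y}(\guard)$ without any modification, and specialize formula~\eqref{eq:Usenmantics2} (omitting clock constraints, since no clocks or invariants have yet been introduced) to
\begin{align*}
(\state, \valueFun{\varSet}) \xrightarrow{\guard,\action_{\u}} (\state', \valueFun{\varSet}[\action_{\u}]) \quad \mathtt{if}\ \exists (\state, \guard, \action_{\u}, \mathtt{NULL}, \state') \in \tranSet : \valueFun{\varSet} \models \guard,
\end{align*}
where $\action_{\u}$ is the UPPAAL action produced by Rule~\ref{rule:action}. By Lemmas~\ref{lm:stMap} and~\ref{lm:tranMap}, states, transitions, and variables of $\statechart$ are in bijective (resp.\ injective) correspondence with those of $\automata$, and by Lemma~\ref{lm:action} the actions $\action_{\y}$ and $\action_{\u}$ update $\valueFun{\varSet}$ identically. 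Since Rule~\ref{rule:guard} merely copies the guard, the premises $\valueFun{\varSet} \models \guard$ on both sides are syntactically identical boolean expressions over the same variable set $\varSet$, and therefore evaluate to the same truth value at every reachable valuation. Consequently the two specialized transition relations coincide, so $\statechart$ and $\automata$ generate equivalent execution traces from any common initial system status, which yields model equivalence by Definition~\ref{def:equivalence}.

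The main obstacle is the interpretation of guards that contain symbols which are not yet legitimate UPPAAL guards at this stage of the transformation pipeline, namely event names and the syntactic forms $\mathtt{every}\ \timer$ and $\mathtt{after}\ \timer$. These symbols are copied verbatim by Rule~\ref{rule:guard} into $\automataSetStep{3}$ and only acquire their operational UPPAAL meaning after Rules~\ref{rule:event} and~\ref{rule:timer} have been applied. I would address this by arguing explicitly that, within the scope of Lemma~\ref{lm:guard}, we treat every guard as a pure boolean expression over $\varSet$ with the same semantics on both sides; the semantic effects introduced by channels and clocks are deferred to the dedicated lemmas for Rules~\ref{rule:event} and~\ref{rule:timer}, which will re-prove equivalence after those replacements are performed. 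This compositional decomposition keeps the present proof as a clean one-to-one syntactic mapping argument.
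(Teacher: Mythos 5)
Your proposal is correct and follows essentially the same route as the paper's proof: restrict to the simple case (one outgoing transition per state, no events or timing triggers in guards), specialize formulas~\eqref{eq:Ysenmantics1} and~\eqref{eq:Usenmantics2} to structurally identical transition relations, invoke Lemmas~\ref{lm:stMap} and~\ref{lm:tranMap} for the element correspondences, and conclude trace equivalence. The only minor difference is that the paper additionally assumes actions away entirely (deferring that case to Lemma~\ref{lm:action}), whereas you carry the Rule~\ref{rule:action}-transformed actions along and appeal to Lemma~\ref{lm:action} for their agreement --- an equally valid compositional hand-off at the same level of rigor.
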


\begin{proof}
	We assume that each state in the given \yakindu\ statechart $\statechart$
	only has one outgoing transition, all transition guards do not
	contain events nor timing triggers,
	and all states and transitions in $\statechart$ do not contain actions.
	The case with actions have been proven in Lemma~\ref{lm:action}.
	The other three cases with events, timing triggers,
	and multiple outgoing transitions will be proven in
	Lemma~\ref{lm:event}, Lemma~\ref{lm:timer}, and Lemma~\ref{lm:priority},
	respectively.	
	
	With the above assumptions, the semantics transition of the
	given \yakindu\ statechart $\statechart$, i.e.,
	formula~\eqref{eq:Ysenmantics1}, is simplified by omitting actions
	and transition priority constraints as
	\begin{align}
	\label{eq:guardY}	
	(\state, \valueFun{\varSet}) \xrightarrow{\guard} (\state', \valueFun{\varSet})
	\ \ \mathtt{ if } \ \exists (\state, \guard, \state') \in \tranSet :
	\valueFun{\varSet} \models \guard.
	\end{align}
	
	According to Lemma~\ref{lm:stMap}, the given \yakindu\ statechart $\statechart$
	has a unique transformed \uppaal\ timed automaton $\automata$.
	The semantics transition of the transformed \uppaal\ timed
	automaton $\automata$, i.e., formula~\eqref{eq:Usenmantics2},
	is simplified by omitting actions and clock constraints as
	\begin{align}
	\label{eq:guardU}
	(\state, \valueFun{\varSet}) \xrightarrow{\guard} (\state', \valueFun{\varSet})
	\ \ \mathtt{ if } \ \exists  (\state, \guard, \state') \in \tranSet : \valueFun{\varSet} \models \guard.
	\end{align}
	
	Based on Lemma~\ref{lm:stMap} and Lemma~\ref{lm:tranMap}, every state, transition, and variable in $\statechart$
	have an unique corresponding location, edge, and variable in $\automata$.	
	The semantics transitions of $\statechart$ and $\automata$, i.e.,
	formula~\eqref{eq:guardY} and formula~\eqref{eq:guardU},
	are equivalent. Therefore, $\statechart$ and $\automata$ have
	the equivalent execution traces under the same initial system
	statuses, which means $\statechart$ and $\automata$ are equivalent.
\end{proof}

\begin{lemma}
	\label{lm:event}
	Given a basic \yakindu\ statechart 
	$\statechart = (\stateSet, \state_0, \tranSet, \guardSet, \actionSet, \varSet, \tranPrioritySet, \inStateActionAssign, \outStateActionAssign)$
	and its transformed \uppaal\ timed automaton
	$\automata = (\stateSet, \state_0, \tranSet, \guardSet, \actionSet, \varSet, \clockSet, \invariant)$
	by applying transformation Rule~\ref{rule:event},
	the statechart $\statechart$ and timed automaton $\automata$
	are equivalent.
\end{lemma}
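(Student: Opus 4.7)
The plan is to follow the same template established in Lemmas~\ref{lm:action} and~\ref{lm:guard}. First I would adopt simplifying assumptions isolating the rule under consideration: each state has exactly one outgoing transition, transitions carry no actions, no state has entry/exit actions, and guards do not contain timing triggers. The multi-outgoing-transition case is deferred to Lemma~\ref{lm:priority}, actions to Lemma~\ref{lm:action}, and timing triggers to Lemma~\ref{lm:timer}. I would split the remaining guards into two subcases: (i) the guard references no event variables, which is already covered by Lemma~\ref{lm:guard}, so there is nothing to show; and (ii) the guard is of the form $\event \in \eventVarSet$. The content of this lemma is therefore reduced to the second subcase.

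Next I would instantiate the semantics on both sides. On the \yakindu\ side, formula~\eqref{eq:Ysenmantics1} collapses to
\[
(\state, \valueFun{\varSet}) \xrightarrow{\event} (\state', \valueFun{\varSet}) \quad \mathtt{if} \ (\state, \event, \mathtt{NULL}, \tranPriority, \state') \in \tranSet \wedge \valueFun{\varSet} \models \event.
\]
On the \uppaal\ side, Rule~\ref{rule:event} rewrites the edge's guard from $\event$ to $\event?$, declares $\mathtt{chan} \ \event$, and adds the auxiliary $\eventAutomata$ whose sole edge is a self-loop labeled $\event!$ that is always enabled and never changes $\eventAutomata$'s single location. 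The only applicable semantic rule is the synchronization rule~\eqref{eq:Usenmantics5}: the joint step firing $\event!$ from $\eventAutomata$ with $\event?$ from $\automata$ yields
\[
(\overline{\state}, \valueFun{\varSet}) \xrightarrow{\event?, \mathtt{NULL}, \event, \mathtt{NULL}} (\overline{\state}[\state'/\state], \valueFun{\varSet}).
\]
Because $\eventAutomata$'s sender edge has no guard constraint on $\varSet$ and is perpetually available, the synchronization is enabled in exactly those valuations where the \yakindu\ guard $\event$ is satisfied.

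I would then invoke the remark preceding Lemma~\ref{lm:stMap}, which permits us to project away the auxiliary automaton when comparing traces. By Lemmas~\ref{lm:stMap} and~\ref{lm:tranMap}, the location and the non-event portion of $\valueFun{\varSet}$ are in bijection with $\statechart$'s state and variable valuation, so projecting the joint step onto $(\stateSet, \valueFun{\varSet})$ produces precisely the simplified \yakindu\ transition above. Starting from identical initial system statuses $\sysstate_\statechart^0 = \sysstate_\automata^0$, induction on the trace length yields $\trace_\statechart = \trace_\automata$, which is the equivalence required by Definition~\ref{def:equivalence}.

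The main obstacle I anticipate is making the ``auxiliary automaton is transparent'' claim rigorous in a setting where events in \yakindu\ are conceptually supplied by the environment. One must argue that $\eventAutomata$ does not add spurious behavior: since its self-loop neither reads nor writes any variable in the original $\varSet$ and its location is singleton, the set of reachable projected statuses is unchanged; and it does not remove behavior either, because its sender edge is always enabled, so any receiver-side guard that is otherwise satisfied in the \yakindu\ semantics can be fired as a synchronization in \uppaal. Packaging this observation into the induction step — showing that the joint step and the \yakindu\ event-triggered step correspond one-to-one under the projection — is the crux of the argument.
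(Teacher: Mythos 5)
Your proposal matches the paper's own argument: the same simplifying assumptions (single outgoing transition, no actions, event-only guards, other cases deferred to Lemmas~\ref{lm:action}, \ref{lm:guard}, \ref{lm:timer}, \ref{lm:priority}), the same reduction of formula~\eqref{eq:Ysenmantics1} to an event-labeled step and of the synchronization rule~\eqref{eq:Usenmantics5} to the joint $\event?$/$\event!$ step with the auxiliary $\eventAutomata$, and the same appeal to Lemmas~\ref{lm:stMap} and~\ref{lm:tranMap} plus the convention of ignoring auxiliary automata to conclude trace equivalence. Your extra remarks on making the transparency of $\eventAutomata$ and the trace-length induction explicit only elaborate what the paper states more tersely, so the approach is essentially identical.
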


\begin{proof}
	We assume that each state in the given \yakindu\ statechart $\statechart$
	only has one outgoing transition, 
	the guard of every transition is an \textit{event} $\event$,
	and all states and transitions in $\statechart$ do not contain actions.
	The cases with actions and regular guards have been proven in Lemma~\ref{lm:action} and Lemma~\ref{lm:guard}, respectively.
	The other cases with timing triggers and multiple outgoing transitions will be proven in Lemma~\ref{lm:timer} and
	Lemma~\ref{lm:priority}, respectively.
	
	With the above assumptions, the semantics transition of the
	given \yakindu\ statechart $\statechart$, i.e.,
	formula~\eqref{eq:Ysenmantics1}, is simplified by substituting
	guards with $\event$ and omitting actions and transition priority constraints as	
	\begin{align}
	\label{eq:eventY}
	(\state, \valueFun{\varSet}) \xrightarrow{\event} (\state', \valueFun{\varSet})
	\quad \mathtt{ if } \ \exists (\state, \event, \state') \in \tranSet :
	\valueFun{\varSet} \models \event.
	\end{align}
	
	According to Lemma~\ref{lm:stMap}, the given \yakindu\ statechart $\statechart$
	has a unique transformed \uppaal\ timed automaton $\automata$.
	The semantics transition with channel synchronizations of the
	transformed \uppaal\ timed automaton $\automata$, i.e.,
	formula~\eqref{eq:Usenmantics5}, is simplified by applying
	Rule~\ref{rule:event} and omitting actions and clock constraints as	
	\begin{align}
	\label{eq:eventU}
	\begin{split}
	&(\overline{\state}, \valueFun{\varSet}) \xrightarrow{\event}
	(\overline{\state}[\state_i^{'}/\state_i, \state_j^{'}/\state_j], \valueFun{\varSet})\\
	&\quad \mathtt{ if } \ \exists 
	(\state_i, \event?, \state'_i) \in \tranSet_i \wedge 
	(\state_j, \event!, \state'_j) \in \tranSet_j \ :
	\valueFun{\varSet} \models \event.
	\end{split}			
	\end{align}
	In formula~\eqref{eq:eventU}, $\state_i$ and $\tranSet_i$ denote
	the location and the edge set of the transformed \uppaal\
	timed automaton; $\state_j$ and $\tranSet_j$ denote
	the location and the edge set of the auxiliary event
	automaton.
	
	Based on Lemma~\ref{lm:stMap} and Lemma~\ref{lm:tranMap}, every state, transition, and variable in $\statechart$
	have an unique corresponding location, edge, and variable in $\automata$.
	According to semantics transitions of $\statechart$ and $\automata$, i.e.,
	formula~\eqref{eq:eventY} and formula~\eqref{eq:eventU},
	if the event $\event$ is triggered, $\statechart$ and
	$\automata$ have the equivalent execution traces under
	the same initial system statuses; otherwise, both $\statechart$
	and $\automata$ stay in their current state.
	Therefore, the statechart $\statechart$ and
	timed automaton $\automata$ are equivalent.
\end{proof}

\begin{lemma}
	\label{lm:timer}
	Given a basic \yakindu\ statechart 
	$\statechart = (\stateSet, \state_0, \tranSet, \guardSet, \actionSet, \varSet, \tranPrioritySet, \inStateActionAssign, \outStateActionAssign)$
	and its transformed \uppaal\ timed automaton
	$\automata = (\stateSet, \state_0, \tranSet, \guardSet, \actionSet, \varSet, \clockSet, \invariant)$
	by applying transformation Rule~\ref{rule:timer},
	the statechart $\statechart$ and timed automaton $\automata$
	are equivalent.
\end{lemma}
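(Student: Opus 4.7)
The plan is to follow the same template used for Lemmas~\ref{lm:action}--\ref{lm:event}: isolate the timing-trigger feature by assumption, reduce each side's semantics to the relevant transition rule, and then match the reduced rules symbol for symbol using Lemmas~\ref{lm:stMap} and~\ref{lm:tranMap} for the one-to-one correspondence of states, transitions, and variables. I would first assume that each state of $\statechart$ has exactly one outgoing transition whose guard is a timing trigger, either $\mathtt{every}\ \timer$ or $\mathtt{after}\ \timer$, with no actions on states or transitions; the orthogonal cases (actions, plain guards, events, and multiple outgoing transitions) are already covered by Lemmas~\ref{lm:action}, \ref{lm:guard}, \ref{lm:event}, and the forthcoming Lemma~\ref{lm:priority}, so restricting attention here is justified.

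Next I would specialize the \yakindu\ transition relation~\eqref{eq:Ysenmantics1} under these assumptions, obtaining the two simplified forms
\begin{align*}
(\state,\valueFun{\varSet}) \xrightarrow{\mathtt{every}\ \timer} (\state',\valueFun{\varSet})
\quad\text{and}\quad
(\state,\valueFun{\varSet}) \xrightarrow{\mathtt{after}\ \timer} (\state',\valueFun{\varSet}),
\end{align*}
each fireable exactly when the elapsed time since the last firing (resp.\ since entering the trigger's scope) equals $\timer$. On the \uppaal\ side, Rule~\ref{rule:timer} creates the auxiliary automaton $\timerAutomata^{\mathtt{every}}$ or $\timerAutomata^{\mathtt{after}}$ with clock $\clock_\timer$, invariant $\clock_\timer\le\timer$, and a synchronizing edge guarded by $\clock_\timer==\timer$ that resets $\clock_\timer$, while every guard in $\automataSetStep{4}$ containing $\mathtt{every}\ \timer$ or $\mathtt{after}\ \timer$ is rewritten to $\var_\timer?$. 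I would then specialize the synchronization rule~\eqref{eq:Usenmantics5} together with the delay rule~\eqref{eq:Usenmantics3} to obtain, for the every case,
\begin{align*}
(\overline{\state},\valueFun{\varSet},\valueFun{\clockSet})
\xrightarrow{\var_\timer}
(\overline{\state}[\state'_i/\state_i],\valueFun{\varSet},\valueFun{\clockSet}[\clock_\timer:=0])
\end{align*}
fireable only when $\clock_\timer=\timer$, and an analogous form for the after case in which the timer automaton advances from $\state_0$ to $\state_1$ and thereafter never re-synchronizes.

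The matching argument then has two directions. For soundness, every \uppaal\ synchronization on $\var_\timer$ requires $\clock_\timer=\timer$, and because the invariant forces a delay of exactly $\timer$ before the guard becomes enabled, each firing corresponds to a \yakindu\ step on $\mathtt{every}\ \timer$ (resp.\ $\mathtt{after}\ \timer$) with matching source/target location (Lemma~\ref{lm:stMap}) and unchanged variable valuation (no actions). For completeness, any \yakindu\ firing on the timing trigger can be matched by first taking a delay transition~\eqref{eq:Usenmantics3} of length $\timer$ in the \uppaal\ model (allowed since $\clock_\timer\le\timer$ holds throughout) and then the synchronizing edge. Thus the traces agree under equal initial system status, giving $\statechart\equiv\automata$ by Definition~\ref{def:equivalence}.

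The main obstacle I expect is the treatment of the $\mathtt{after}\ \timer$ trigger: in \yakindu\ it is semantically bound to the lifetime of the currently active state, whereas the transformed $\timerAutomata^{\mathtt{after}}$ is a global automaton whose clock starts at the beginning of the \uppaal\ run. I would need to argue carefully that, under the single-transition assumption of this lemma and with $\clock_\timer$ initialized to $0$, the single synchronization on $\var_\timer$ occurs precisely once at time $\timer$, matching the single enabled firing of the corresponding \yakindu\ transition; the more delicate multi-state scoping issues only arise in models built from the composite-state patterns of~\cite{Guo2017CBMS}, which are outside the scope of this lemma.
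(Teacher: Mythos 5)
Your skeleton coincides with the paper's: the same isolating assumptions (one outgoing transition per state, guard equal to a timing trigger, no actions, other cases deferred to Lemmas~\ref{lm:action}, \ref{lm:guard}, \ref{lm:event} and~\ref{lm:priority}), the same specialization of formula~\eqref{eq:Ysenmantics1} on the \yakindu\ side and of the synchronization rule~\eqref{eq:Usenmantics5} on the \uppaal\ side, and the same appeal to Lemmas~\ref{lm:stMap} and~\ref{lm:tranMap} for the element correspondence. Where you diverge is that you keep the clock $\clock_\timer$, the invariant, and the delay rule~\eqref{eq:Usenmantics3} in play and try to prove a genuine timed correspondence in both directions. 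The paper does the opposite: it explicitly \emph{omits} the clock constraints, on the grounds that they are implemented entirely inside the auxiliary timing trigger automaton and the transformed automaton $\automata$ interacts with them only through the channel $\var_\timer$, and then argues at the coarse level ``if the trigger $\timer$ is enabled both models fire, otherwise both stay,'' exactly as in Lemma~\ref{lm:event}.

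The extra ambition is where your argument has a gap. Your completeness direction (``any \yakindu\ firing on the timing trigger can be matched by first taking a delay of length $\timer$ and then the synchronizing edge'') presupposes a real-time semantics for \yakindu\ timing triggers, but Definition~\ref{def:stSen1} contains no delay transitions at all: on the statechart side $\mathtt{after}\ \timer$ and $\mathtt{every}\ \timer$ are just guards with $\valueFun{\varSet} \models \timer$, so there is nothing in the formalism against which to match your \uppaal\ delay steps, and trace equivalence in the sense of Definitions~\ref{def:trace}--\ref{def:equivalence} cannot even see the time-passing transitions you introduce. Relatedly, the obstacle you flag yourself --- that in \yakindu\ an $\mathtt{after}$ trigger is scoped to the active state while $\timerAutomata^{\mathtt{after}}$ is a global automaton whose clock starts at the beginning of the run --- is a real discrepancy, and your proposal leaves it as an acknowledged ``I would need to argue carefully'' rather than resolving it; note that it is not confined to composite-state patterns, since even in this lemma's single-transition setting the state guarded by $\mathtt{after}\ \timer$ need not be entered at time $0$. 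The paper's proof does not resolve this either --- it sidesteps it by abstracting the clocks away --- so to complete your version you must either adopt that abstraction (at which point your proof collapses to the paper's) or first enrich the \yakindu\ semantics with an explicit notion of time, which is outside what the paper provides.
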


\begin{proof}	
	We assume that each state in the given \yakindu\ statechart $\statechart$
	only has one outgoing transition, 
	the guard of every transition is a \textit{timing trigger} $\timer$,
	and all states and transitions in $\statechart$ do not contain actions.
	The cases with actions, regular guards, and event guards have been proven in Lemma~\ref{lm:action}, Lemma~\ref{lm:guard}, and Lemma~\ref{lm:event}, respectively.
	The other case with multiple outgoing transitions will be proven in
	Lemma~\ref{lm:priority}.
	
	With the above assumptions, the semantics transition of the
	given \yakindu\ statechart $\statechart$, i.e.,
	formula~\eqref{eq:Ysenmantics1}, is simplified by substituting
	guards with $\timer$ and omitting actions and transition priority constraints as		
	\begin{align}
	\label{eq:timerY}
	(\state, \valueFun{\varSet}) \xrightarrow{\timer} (\state', \valueFun{\varSet})
	\ \ \mathtt{ if } \ \exists (\state, \timer, \state') \in \tranSet :
	\valueFun{\varSet} \models \timer.
	\end{align}
	
	According to Lemma~\ref{lm:stMap}, the given \yakindu\ statechart $\statechart$
	has a unique transformed \uppaal\ timed automaton $\automata$.
	The semantics transition with channel synchronizations of the
	transformed \uppaal\ timed automaton $\automata$, i.e.,
	formula~\eqref{eq:Usenmantics5}, is simplified by applying
	Rule~\ref{rule:timer} and omitting actions and clock constraints as	
	\begin{align}
	\label{eq:timerU}
	\begin{split}
	&(\overline{\state}, \valueFun{\varSet}) \xrightarrow{\timer}
	(\overline{\state}[\state_i^{'}/\state_i, \state_j^{'}/\state_j], \valueFun{\varSet})\\
	&\ \mathtt{ if } \ \exists 
	(\state_i, \var_{\timer}?, \state'_i) \in \tranSet_i \wedge 
	(\state_j, \var_{\timer}!, \state'_j) \in \tranSet_j \ :
	\valueFun{\varSet} \models \timer.
	\end{split}			
	\end{align}
	According to Rule~\ref{rule:timer}, all clock constraints are
	implemented in the auxiliary timing trigger automaton.
	The transformed timed automaton $\automata$ only synchronizes
	with the auxiliary timing trigger automaton through
	channel $\var_{\timer}$. As we focus on the execution behavior
	of the transformed timed automaton $\automata$, hence
	the semantics transition of $\automata$, i.e.,
	formula~\eqref{eq:timerU}, can be simplified by omitting
	clock constraints.	
	In formula~\eqref{eq:timerU}, $\state_i$ and $\tranSet_i$ denote
	the location and the edge set of the transformed \uppaal\
	timed automaton; $\state_j$ and $\tranSet_j$ denote
	the location and the edge set of the auxiliary timing trigger
	automaton.
	
	Based on Lemma~\ref{lm:stMap} and Lemma~\ref{lm:tranMap}, every state, transition, and variable in $\statechart$
	have an unique corresponding location, edge, and variable in $\automata$.	
	According to semantics transition of $\statechart$ and $\automata$, i.e.,
	formula~\eqref{eq:timerY} and formula~\eqref{eq:timerU},
	if the timing trigger $\timer$ is enabled, $\statechart$ and
	$\automata$ have the equivalent execution traces under
	the same initial system statuses; otherwise, both $\statechart$
	and $\automata$ stay in their current state.
	Therefore, the statechart $\statechart$ and
	timed automaton $\automata$ are equivalent.	
\end{proof}

\begin{lemma}
	\label{lm:priority}
	Given a basic \yakindu\ statechart 
	$\statechart = (\stateSet, \state_0, \tranSet, \guardSet, \actionSet, \varSet, \tranPrioritySet, \inStateActionAssign, \outStateActionAssign)$
	and its transformed \uppaal\ timed automaton
	$\automata = (\stateSet, \state_0, \tranSet, \guardSet, \actionSet, \varSet, \clockSet, \invariant)$
	by applying transformation Rule~\ref{rule:priority},
	the statechart $\statechart$ and timed automaton $\automata$
	are equivalent.
\end{lemma}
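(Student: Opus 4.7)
Following the incremental pattern established in Lemmas~\ref{lm:action}--\ref{lm:timer}, I will isolate the single new feature introduced by Rule~\ref{rule:priority}, namely multiple outgoing transitions from the same state with distinct priorities. I will assume without loss of generality that transitions carry no actions and that each guard is a plain boolean predicate (no events or timing triggers), since those orthogonal features have already been handled by Lemmas~\ref{lm:action}, \ref{lm:event}, and \ref{lm:timer} and can be combined afterwards. Let $\state \in \stateSet$ be a state with outgoing transitions $\outTran_{\y}^{1}, \outTran_{\y}^{2}, \dots, \outTran_{\y}^{k}$ in $\statechart$, sorted by decreasing priority, with corresponding guards $\guard_1, \guard_2, \dots, \guard_k$, and let $\outTran_{\u}^{j}$ denote the corresponding edges of $\automata$ after Rule~\ref{rule:priority} is applied.

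The plan is to compare the one-step transition relations on both sides under the same system status $(\state, \valueFun{\varSet})$. On the statechart side, Definition~\ref{def:stSen1} fires $\outTran_{\y}^{j}$ exactly when $\valueFun{\varSet} \models \guard_j$ and $\prifun(\state, \state'_j, \tranPriority_j, \valueFun{\varSet})$ holds, which under the assumption that each transition has a unique priority unfolds to
\begin{equation*}
\valueFun{\varSet} \models \guard_j \ \wedge \ \bigwedge_{l=1}^{j-1} \valueFun{\varSet} \not\models \guard_l.
\end{equation*}
On the timed automaton side, Rule~\ref{rule:priority} rewrites the guard of $\outTran_{\u}^{j}$ to $\guard_j \wedge \bigwedge_{l=1}^{j-1} \neg \guard_l$, so by Definition~\ref{def:taSen1} (formula~\eqref{eq:Usenmantics2}, after omitting clock and action parts as in the earlier lemmas) edge $\outTran_{\u}^{j}$ fires exactly when the same conjunction holds. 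These two firing conditions are logically identical.

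Having aligned the per-edge firing conditions, I then invoke Lemmas~\ref{lm:stMap} and~\ref{lm:tranMap} to lift the correspondence to whole traces: every state, transition and variable of $\statechart$ has a unique counterpart in $\automata$, so for any initial system status the sequences of successor statuses generated by formulas~\eqref{eq:Ysenmantics1} and~\eqref{eq:Usenmantics2} coincide step by step. By Definitions~\ref{def:traceEq} and~\ref{def:equivalence}, $\statechart$ and $\automata$ are equivalent.

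The main subtlety I anticipate is keeping the notational bookkeeping clean when translating $\prifun$ into an explicit conjunction of negated higher-priority guards, and in particular justifying the uniqueness of priorities (given by Definition~\ref{def:st}) which turns ``no strictly higher-priority enabled transition exists'' into the finite conjunction appearing in Rule~\ref{rule:priority}. Once this equivalence between $\prifun(\state, \state', \tranPriority, \valueFun{\varSet})$ and $\bigwedge_{l=1}^{j-1} \neg \guard_l$ is made explicit, the rest of the argument is a straightforward appeal to the mapping lemmas, mirroring the structure of Lemmas~\ref{lm:action}--\ref{lm:timer}.
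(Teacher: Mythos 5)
Your proposal is correct and follows essentially the same route as the paper's proof: simplify the two semantic transition relations under the no-actions assumption, observe that Rule~\ref{rule:priority}'s conjunction of negated higher-priority guards matches the condition $\prifun(\state, \state', \tranPriority, \valueFun{\varSet})$, and then conclude trace equivalence via Lemmas~\ref{lm:stMap} and~\ref{lm:tranMap}. If anything, your explicit unfolding of $\prifun$ into $\guard_j \wedge \bigwedge_{l=1}^{j-1}\neg\guard_l$ using uniqueness of priorities is slightly more careful than the paper's brief remark that the negated guards are ``implied by'' $\prifun$.
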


\begin{proof}
	We assume that all states and transitions in $\statechart$ do not contain actions.
	The case with actions has been proven in Lemma~\ref{lm:action}.
	
	With the above assumptions, the semantics transition of the
	given \yakindu\ statechart $\statechart$, i.e.,
	formula~\eqref{eq:Ysenmantics1}, is simplified by omitting actions
	as
	\begin{align}
	\label{eq:priorityY}
	\begin{split}
	&(\state, \valueFun{\varSet}) \xrightarrow{\guard, \tranPriority} (\state', \valueFun{\varSet}) \\
	&\ \ \mathtt{ if } \ \exists (\state, \guard, \tranPriority, \state') \in \tranSet :
	(\valueFun{\varSet} \models \guard) \wedge \prifun(\state, \state', \tranPriority, \valueFun{\varSet}),
	\end{split}	
	\end{align}
	where $\prifun(\state, \state', \tranPriority, \valueFun{\varSet}) \equiv \forall (\state, \guard', *, \tranPriority', *) \in \tranSet : 
	\valueFun{\varSet} \models \guard' \wedge \tranPriority \le \tranPriority'$.
	
	According to Lemma~\ref{lm:stMap}, the given \yakindu\ statechart $\statechart$
	has a unique transformed \uppaal\ timed automaton $\automata$.
	The semantics transition of the
	transformed \uppaal\ timed automaton $\automata$, i.e.,
	formula~\eqref{eq:Usenmantics2}, is simplified by applying
	Rule~\ref{rule:priority} and omitting actions and clock constraints as	
	\begin{align}
	\label{eq:priorityU}
	\begin{split}
	&(\state, \valueFun{\varSet}) \xrightarrow{\guard \bigwedge \neg \guard' } (\state', \valueFun{\varSet}) \\
	&\qquad \mathtt{ if } \ \exists  (\state, \guard \bigwedge \neg \guard', \state') \in \tranSet : \valueFun{\varSet} \models \guard \bigwedge \neg \guard',
	\end{split}	
	\end{align}
	where $\guard'$ is the guard of transition in $\{ \tran | \tran \in (\state, \guard', *, \tranPriority', *) \wedge \tranPriority > \tranPriority' \}$.
	
	Based on Lemma~\ref{lm:stMap} and Lemma~\ref{lm:tranMap}, every state, transition, and variable in $\statechart$
	have an unique corresponding location, edge, and variable in $\automata$.
	In formula~\eqref{eq:priorityU}, the guard $\guard'$ satisfies
	$\forall (\state, \guard', *, \tranPriority', *) \in \tranSet: \tranPriority > \tranPriority'$.
	Hence, $\neg \guard'$ satisfies $\forall (\state, \guard', *, \tranPriority', *) \in \tranSet: \tranPriority \le \tranPriority'$,
	which can be implied by $\prifun(\state, \state', \tranPriority, \valueFun{\varSet})$ in formula~\eqref{eq:priorityY}.
	Therefore, $\statechart$ and $\automata$ have
	the equivalent execution traces under the same initial system
	statuses, which means $\statechart$ and $\automata$ are equivalent.	
\end{proof}

\begin{lemma}
	\label{lm:synchrony}	
	Given a \yakindu\ statechart model
	$\statechartSet=\{ (\statechart_i, \statechartPriority_i) | \statechart_i = (\stateSet_i, \state^0_i, \tranSet_i, \guardSet_i, \actionSet_i, \varSet, \tranPrioritySet_i, \inStateActionAssign^i, \outStateActionAssign^i) \wedge \statechartPriority_i = i \wedge 1 \le i \le n \}$
	and its transformed \uppaal\ timed automata model
	$\automataSet = \{ \automata_i | \automata_i = (\stateSet_i, \state^0_i, \tranSet_i, \guardSet_i, \actionSet_i, \varSet, \clockSet, \invariant_i) \wedge 1 \le i \le n \}$ by applying transformation Rule~\ref{rule:synchrony},
	the statechart model $\statechartSet$ and timed automaton model $\automataSet$ are equivalent.
\end{lemma}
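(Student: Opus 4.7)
The plan is to extend the single-automaton equivalence of Lemmas~\ref{lm:action}--\ref{lm:priority} to the network level by showing that the lockstep instrumentation introduced by Rule~\ref{rule:synchrony} faithfully mirrors the two network-semantic transitions \eqref{eq:Ysenmantics2-1} and \eqref{eq:Ysenmantics2-2} in Definition~\ref{def:stSen2}. Concretely, I would set up a correspondence between a Yakindu system status $(\overline{\state}, \valueFun{\varSet})$ together with the execution index $\step$, and the Uppaal system status carrying the integer variable $\step$ introduced by Rule~\ref{rule:synchrony}. Under this correspondence the initial statuses coincide ($\step = 1$, $\overline{\state} = \overline{\state^0}$, and matching variable valuation), so it suffices to verify that a single execution step of each model produces matching successor statuses, and then iterate.

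Case analysis would proceed per statechart $\statechart_i$ at the moment $\step = \statechartPriority_i = i$. If some outgoing transition of $\state_i$ is enabled in $\statechart_i$, formula~\eqref{eq:Ysenmantics2-2} fires the highest-priority one and updates $\step$ via $\increaseStep$; by Lemmas~\ref{lm:guard}--\ref{lm:priority} the guard, action, and priority selection of this transition are reproduced by the corresponding Uppaal edge, and Rule~\ref{rule:synchrony}'s added conjunct $\step == \statechartPriority_i$ together with the appended $\increaseStep$ reproduce the index gating and advancement. If no outgoing transition is enabled, formula~\eqref{eq:Ysenmantics2-1} merely advances $\step$; this is precisely the behavior of the newly inserted self-loop at $\state_i$, whose guard $\mathtt{true} \bigwedge \neg \outTran(\guard)$ over all $\outTran \in \outTranSet(\state_i)$ holds exactly when none of the original outgoing edges is enabled, and whose action (per the last bullet of Rule~\ref{rule:synchrony}) is $\increaseStep$.

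To close the argument I would establish determinism of the Uppaal step: within $\automata_i$ the non-self-loop edges are mutually exclusive by Rule~\ref{rule:priority}, and the inserted self-loop is guarded by the negation of the disjunction of their guards, so at most one edge at $\state_i$ is enabled; combined with the conjunct $\step == \statechartPriority_i$ ensuring that only the index-matching automaton can move, the global Uppaal step is uniquely determined. An induction on the length of the execution prefix, starting from $\sysstate_{\statechart}^0 = \sysstate_{\automata}^0$, then yields $\trace_{\statechart} = \trace_{\automata}$, which is exactly the network-level equivalence required by Definition~\ref{def:equivalence}.

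The main obstacle is a careful justification that the self-loop guard $\mathtt{true} \bigwedge \neg \outTran(\guard)$ captures the precise negation of the Yakindu enabling condition. This requires that, after Rules~\ref{rule:event}, \ref{rule:timer}, and \ref{rule:priority}, each Uppaal edge guard is logically equivalent to the corresponding Yakindu transition guard interpreted on a system status where an event (respectively, a timing trigger) is considered to hold exactly when its auxiliary automaton is ready to synchronize. Once this alignment is invoked, the enable/disable dichotomy of the Yakindu semantics lines up with the mutually exclusive choice between the transformed edges and the added self-loop, and the inductive argument goes through cleanly.
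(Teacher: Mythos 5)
Your proposal is correct and rests on the same core idea as the paper's proof---that the lockstep instrumentation of Rule~\ref{rule:synchrony} (the conjunct $\step == \statechartPriority_i$ plus the appended $\increaseStep$) mirrors the gating and index advancement built into the network semantics of Definition~\ref{def:stSen2}---but you carry it out in a noticeably more complete form. The paper's proof assumes each state has a single outgoing transition and no actions, simplifies formula~\eqref{eq:Ysenmantics2-2} and formula~\eqref{eq:Usenmantics4} to formulas~\eqref{eq:synchronyY} and~\eqref{eq:synchronyU}, and concludes by observing that these two simplified transition relations coincide, deferring everything else to Lemmas~\ref{lm:stMap}--\ref{lm:priority}; it never explicitly treats the stutter case of formula~\eqref{eq:Ysenmantics2-1} or the self-loop edges that Rule~\ref{rule:synchrony} inserts to realize it. Your bisimulation-style induction on execution prefixes handles both cases, argues determinism of the \uppaal\ step (mutual exclusion of priority-augmented guards plus the self-loop guarded by their joint negation), and explicitly flags the one real subtlety the paper leaves implicit: that negating guards containing $\event?$ or $\var_{\timer}?$ only makes sense once event and timing-trigger guards are read as ``the auxiliary automaton is ready to synchronize,'' so that the self-loop guard is the exact negation of the Yakindu enabling condition. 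What the paper's route buys is brevity by leaning on the earlier lemmas' assumptions; what yours buys is an argument that actually covers the transitions Rule~\ref{rule:synchrony} itself introduces, which is where the lemma's content lies.
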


\begin{proof}	
	We assume that each state in the given \yakindu\ statechart $\statechart$
	only has one outgoing transition and 
	all states and transitions in $\statechart$ do not contain actions.
	The cases with actions and multiple outgoing transitions
	have been proven in Lemma~\ref{lm:action} and Lemma~\ref{lm:priority}, respectively.
	
	With the above assumptions, the semantics transition of the
	given \yakindu\ statechart $\statechart$, i.e.,
	formula~\eqref{eq:Ysenmantics2-2}, is simplified by
	omitting actions and transition priority constraints as
	\begin{align}
	\label{eq:synchronyY}
	\begin{split}
	&(\overline{\state},\valueFun{\varSet})
	\xrightarrow{\guard,\statechartPriority_i}
	(\overline{\state}[\state'_i/\state_i], \valueFun{\varSet}[\increaseStep])\\
	&\qquad \mathtt{if} \ \statechartPriority_i==\step \ \wedge \
	\exists (\state_i, \guard, \state'_i) \in \tranSet_i :
	\valueFun{\varSet} \models \guard.
	\end{split}			
	\end{align}
	
	According to Lemma~\ref{lm:stMap}, the given \yakindu\ statechart $\statechart$
	has a unique transformed \uppaal\ timed automaton $\automata$.
	The semantics transition of the
	transformed \uppaal\ timed automaton $\automata$, i.e.,
	formula~\eqref{eq:Usenmantics4}, is simplified by applying
	Rule~\ref{rule:synchrony} and omitting actions and clock constraints as
	\begin{align}
	\label{eq:synchronyU}
	\begin{split}
	&(\overline{\state}, \valueFun{\varSet}) \xrightarrow{\guard \ \&\& \ (\step == \statechartPriority_i)} (\overline{\state}[\state_i^{'}/\state_i], \valueFun{\varSet}[\increaseStep]) \\
	&\qquad \mathtt{ if } \ \exists (\state_i, \guard \ \&\& \ (\step == \statechartPriority_i), \state'_i) \in \tranSet_i \ :
	\valueFun{\varSet} \models \guard \ \&\& \ \step == \statechartPriority_i
	\end{split}			
	\end{align}
	
	Based on Lemma~\ref{lm:stMap} and Lemma~\ref{lm:tranMap}, every state, transition, and variable in $\statechart$
	have an unique corresponding location, edge, and variable in $\automata$.
	The semantics
	transitions of $\statechartSet$ and $\automataSet$, i.e.,
	formula~\eqref{eq:synchronyY} and formula~\eqref{eq:synchronyU},
	are equivalent. Therefore, $\statechartSet$ and $\automataSet$ have
	the equivalent execution traces under the same initial system
	statuses, which means $\statechartSet$ and $\automataSet$ are equivalent.
\end{proof}

\begin{theorem}
	\label{thm:singlest}
	Given a basic \yakindu\ statechart $\statechart$
	and its transformed \uppaal\ timed automaton $\automata$,	
	the statechart $\statechart$ and timed automaton $\automata$
	are equivalent.
\end{theorem}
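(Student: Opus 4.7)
The plan is to prove Theorem~\ref{thm:singlest} by induction on the sequence of transformation rules (Rules~\ref{rule:initial}--\ref{rule:synchrony}) used to construct $\automata$ from $\statechart$. The base case corresponds to Rule~\ref{rule:initial}: by Lemma~\ref{lm:stMap} and Lemma~\ref{lm:tranMap}, the resulting $\automataSetStep{1}$ shares a bijective state correspondence and an injective transition/variable correspondence with $\statechart$, so under any common initial system status $\sysstate_0$ the (still trivial) one-step transition relations agree. From there, I would walk through the intermediate models $\automataSetStep{2}, \automataSetStep{3}, \dots, \automataSetStep{7}$ and invoke Lemma~\ref{lm:action}, Lemma~\ref{lm:guard}, Lemma~\ref{lm:event}, Lemma~\ref{lm:timer}, Lemma~\ref{lm:priority} and Lemma~\ref{lm:synchrony} in turn, each guaranteeing that the relevant transition-rule formula of $\statechart$ matches the semantics formula of the corresponding intermediate automaton.

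Because execution trace equivalence (Definition~\ref{def:traceEq}) is clearly transitive as an equality on status sequences, chaining the six lemmas yields $\trace_\statechart = \trace_\automata$ for $\automata = \automataSetStep{7}$ whenever $\sysstate^0_\statechart = \sysstate^0_\automata$, which is exactly Definition~\ref{def:equivalence}. For the auxiliary automata $\eventAutomataSet$ and $\timerAutomataSet$ introduced by Rules~\ref{rule:event} and~\ref{rule:timer}, I would appeal to the remark made before Lemma~\ref{lm:stMap}: these helpers only drive the $\event?$/$\var_{\timer}?$ synchronizations that reproduce the guard evaluations of $\event$ and $\timer$, and do not contribute independently to the projected execution trace of $\automata$.

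The main obstacle is the composition of the six lemmas, because each of them was proved under simplifying assumptions (single outgoing transition, no actions, guards either plain or an event, etc.). To justify stacking them on a statechart that simultaneously possesses actions, events, timing triggers, and competing outgoing transitions, I would observe that the rules act on pairwise-disjoint slots of each edge tuple $(\state,\guard,\action,\clock,\state')$: Rule~\ref{rule:action} fills the action slot with $<\outAction;\tranAction;\inAction>$, Rules~\ref{rule:guard}--\ref{rule:timer} construct the guard slot (possibly rewriting $\event$ as $\event?$ and $\mathtt{every/after}\,\timer$ as $\var_{\timer}?$), and Rules~\ref{rule:priority}--\ref{rule:synchrony} only conjoin extra predicates ($\neg\guard'$ for higher-priority rivals and $\step==\statechartPriority_i$) without altering previously set fields. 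Thus the semantic transition of $\automata$ is obtained from formula~\eqref{eq:Ysenmantics1} by orthogonal rewrites, and each rewrite has already been certified semantics-preserving by its lemma.

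Finally, I would close the argument by noting that since $\statechart$ is a single statechart, $n=1$ and the synchrony machinery of Rule~\ref{rule:synchrony} degenerates to trivially advancing $\step$, so Lemma~\ref{lm:synchrony} applies unconditionally; any idle transitions permitted by the self-loops added in Rule~\ref{rule:synchrony} correspond exactly to the stay-in-place behavior of the statechart when no transition is enabled, which is captured by formula~\eqref{eq:Ysenmantics2-1}. Combining all of the above gives $\trace_\statechart = \trace_\automata$ under $\sysstate^0_\statechart = \sysstate^0_\automata$, concluding that $\statechart$ and $\automata$ are equivalent per Definition~\ref{def:equivalence}.
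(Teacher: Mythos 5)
Your proposal is correct and follows essentially the same route as the paper: the paper also proves Theorem~\ref{thm:singlest} by induction over the rule applications, invoking Lemma~\ref{lm:stMap} for Rule~\ref{rule:initial} and Lemmas~\ref{lm:action}--\ref{lm:priority} for the remaining rules, and disposing of the composition issue with the brief claim that ``each rule does not interfere the execution semantics of other rules,'' which your disjoint-slot (orthogonality) argument makes explicit. The only minor divergence is that you also fold in Rule~\ref{rule:synchrony} via Lemma~\ref{lm:synchrony} for the degenerate $n=1$ case, whereas the paper restricts the single-statechart proof to Rules~\ref{rule:initial}--\ref{rule:priority} and defers the synchrony rule to Theorem~\ref{thm:mulst}; your treatment is, if anything, slightly more complete on that point.
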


\begin{proof}	
	We prove the theorem by induction on the number $N$ of iterations the rule is applied.\\
	\noindent \textbf{Base Case:} When $N=1$, prove that the transformation maintains the execution semantics equivalence.
	To transform the given \yakindu\ statechart $\statechart$, 
	Rules~\ref{rule:initial}-\ref{rule:priority}
	can be applied. We have proven that Rule~\ref{rule:initial}
	maintains elements' one-to-one mapping through Lemma~\ref{lm:stMap}
	and Rules~\ref{rule:action}-\ref{rule:priority} maintain model
	equivalence through Lemma~\ref{lm:action} to Lemma~\ref{lm:priority}.
	Hence, the statement holds for the base case.\\
	\noindent \textbf{Induction Step:} Assume the statement is true when $N=k$ and prove that
	it also holds when $N=k+1$. According to the transformation rules, each
	rule does not interfere the execution semantics of other rules.
	The base case has proven each transformation rule maintains the 
	execution equivalence.
	Therefore, if the statement holds when $N=k$, then it is also true when $N=k+1$.
\end{proof}

\begin{theorem}
	\label{thm:mulst}
	Given a \yakindu\ statechart model $\statechartSet$
	and its transformed \uppaal\ timed automata model $\automataSet$,
	the statechart model $\statechartSet$ and timed automata model
	$\automataSet$ are equivalent.
\end{theorem}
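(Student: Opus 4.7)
The plan is to reduce the network case to the single-statechart case established in Theorem~\ref{thm:singlest}, exploiting the fact that the lockstep/synchrony mechanism (Definition~\ref{def:stSen2} on the \yakindu\ side and Rule~\ref{rule:synchrony} on the \uppaal\ side) serializes the execution: at any moment, exactly one component, namely the one whose priority matches the execution index $\step$, is allowed to move, while all others are frozen. The key observation is therefore that a single global step of $\statechartSet$ corresponds to a single local step in one component $\statechart_i$ of that network, and symmetrically for $\automataSet$ and $\automata_i$. Given equivalence at the component level, equivalence at the network level should follow by bookkeeping on $\step$.

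Concretely, I would proceed by induction on the length $k$ of the execution trace starting from a common initial system status $\sysstate_0$. For the base case $k=0$, both models are in $\sysstate_0$ by hypothesis, so the initial prefixes agree. For the inductive step, assume after $k$ steps both models are in the same system status $(\overline{\state},\valueFun{\varSet})$ with the same value of $\step$; I would then case-split on whether any outgoing transition of $\state_i$ in the active component $\statechart_i$ (with $\statechartPriority_i=\step$) is enabled under $\valueFun{\varSet}$. If some transition is enabled, formula~\eqref{eq:Ysenmantics2-2} fires exactly the highest-priority enabled transition in $\statechart_i$ and updates $\step$ via $\increaseStep$; on the \uppaal\ side, the conjunct $\step==\statechartPriority_i$ added by Rule~\ref{rule:synchrony} restricts firing to edges of $\automata_i$, the conjuncts added by Rule~\ref{rule:priority} pick the same edge that Lemma~\ref{lm:priority} showed corresponds to the selected transition, and the trailing $\increaseStep$ action updates $\step$ identically. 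Lemmas~\ref{lm:action}--\ref{lm:synchrony} together ensure the resulting system statuses coincide. If no transition is enabled, formula~\eqref{eq:Ysenmantics2-1} executes a no-op that only updates $\step$; the matching behavior on the \uppaal\ side is the self-loop edge added to $\state_i$ by Rule~\ref{rule:synchrony}, whose guard is precisely the conjunction of negated outgoing edge guards and which carries only the $\increaseStep$ action.

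The final paragraph of the argument notes that the auxiliary event automata $\eventAutomataSet$ and timing-trigger automata $\timerAutomataSet$, as explicitly pointed out before Lemma~\ref{lm:stMap}, do not affect the observable execution behavior since they only supply the channel synchronizations whose semantics are already accounted for in Lemmas~\ref{lm:event} and~\ref{lm:timer}; consequently one can drop them when comparing traces without loss of generality. Combining all steps, $\statechartSet$ and $\automataSet$ produce identical traces from any common initial system status, so by Definition~\ref{def:equivalence} they are equivalent.

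The main obstacle will be the no-transition-enabled case: I must verify that the semantics of the added \uppaal\ self-loop (guard $\mathtt{true}\bigwedge\neg\outTran(\guard)$ over all outgoing edges of $\state_i$) is logically equivalent to the \yakindu\ condition $\forall(\state_i,\guard,*,*,*)\in\tranSet_i:\valueFun{\varSet}\not\models\guard$ from formula~\eqref{eq:Ysenmantics2-1}, after Rule~\ref{rule:priority} has rewritten the original guards by conjoining negations of higher-priority ones. A careful syntactic check is needed to confirm that the disjunction of the rewritten guards still covers exactly the same set of valuations as the disjunction of the original \yakindu\ guards, so that their joint negation agrees on both sides; once that algebraic identity is in hand, the inductive argument closes cleanly.
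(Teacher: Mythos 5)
Your proposal is correct, but it takes a genuinely different route from the paper. The paper disposes of Theorem~\ref{thm:mulst} in three sentences: the single-statechart case is delegated to Theorem~\ref{thm:singlest} (whose proof is an induction on the number of rule applications, composing the per-rule equivalence lemmas), and the multi-statechart case is claimed to follow by ``the same induction'' with the sole addition of Rule~\ref{rule:synchrony}, already handled in Lemma~\ref{lm:synchrony}. You instead prove trace equivalence directly, by induction on the length of the execution trace, using the lockstep variable $\step$ to serialize the network and case-splitting on whether the active component has an enabled transition. This buys you two things the paper's argument glosses over: (i) an explicit justification of how the per-component equivalences compose at the network level (the paper's ``each rule does not interfere with the execution semantics of other rules'' is asserted, not argued), and (ii) explicit treatment of the stuttering case of formula~\eqref{eq:Ysenmantics2-1}, which Lemma~\ref{lm:synchrony} never addresses since it only simplifies formula~\eqref{eq:Ysenmantics2-2}; your matching of that case against the self-loop edges added by Rule~\ref{rule:synchrony} is exactly the missing piece. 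The obstacle you flag is real but benign: since each rewritten guard is $\guard_i \wedge \neg\guard_1 \wedge \dots \wedge \neg\guard_{i-1}$, the disjunction of the rewritten guards is logically equivalent to the disjunction of the original ones, so the negated self-loop guard coincides with the \yakindu\ condition that no original guard is satisfied, and your induction closes. The trade-off is that the paper's route is more modular (each rule verified once, composition by a one-line induction), while yours is less modular but more faithful to Definition~\ref{def:traceEq} and closer to a genuine lockstep simulation argument; you still rely on Lemmas~\ref{lm:action}--\ref{lm:synchrony} to match guards, actions, events, and timing triggers within a single step, so the two proofs share the same foundations.
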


\begin{proof}
	If the given \yakindu\ model $\statechartSet$ only contains one statechart,
	the statement holds as proven in Theorem~\ref{thm:singlest}.
	If $\statechartSet$ contains multiple statecharts, we prove the theorem with
	similar induction approach as the proof of Theorem~\ref{thm:singlest}.
	Compared with Theorem~\ref{thm:singlest}, the only difference is that the transformation
	of multiple statecharts needs to apply Rule~\ref{rule:synchrony} which has been proven
	in Lemma~\ref{lm:synchrony}. Therefore, $\statechartSet$ and $\automataSet$ are equivalent.
\end{proof}

\section{Simplified Cardiac Arrest Treatment Case Study}
\label{sec:exp}
\subsection{Simplified Cardiac Arrest Treatment Scenario}
\label{subsec:cardiacExp}
Cardiac arrest is the abrupt loss of heart function and
can lead to death within minutes.
In a simplified cardiac arrest treatment scenario~\cite{WuTreatment2014}, medical staff intend
to activate a defibrillator to deliver a therapeutic level of
electrical shock that can correct certain types of deadly irregular
heart-beats such as ventricular fibrillation. The medical
staff need to check two preconditions: (1) patient's airway and
breathing are under control and (2) the electrocardiogram (ECG) monitor shows a
shockable rhythm.
Suppose the patient's airway is open and
breathing is under control, but the ECG monitor shows a
non-shockable rhythm.
In order to induce a shockable rhythm,
a drug, called epinephrine (EPI), is commonly given to increase
cardiac output. Giving epinephrine, however, also has two
preconditions: (1) patient's blood pH value should be larger than
7.4 and (2) urine flow rate should be greater than 12 mL/s.
In order to correct these two preconditions, sodium bicarbonate
should be given to raise blood pH value, and intravenous (IV) fluid
should be increased to improve urine flow rate.

The simplified cardiac arrest treatment scenario has two safety properties:
(1) \textbf{P1:} the defibrillator is activated only if the ECG rhythm is
shockable and breathing is normal; and
(2) \textbf{P2:} the epinephrine is injected only if the blood pH
value is larger than 7.4 and urine flow rate is higher than 12 mL/s.

\subsection{Model Validation and Verification}
\label{subsec:cardiacModel}

Wu~\etal\ developed a validation protocol to enforce the correct execution
sequence of performing treatment, regarding preconditions validation, side
effects monitoring, and expected responses checking based on
the pathophysiological models~\cite{WuTreatment2014}.
We use \yakindu\ statecharts~\cite{yakindu} to model the simplified
cardiac arrest treatment procedure with the validation
protocol~\cite{WuTreatment2014}, as depicted in Fig.~\ref{fig:cardiacY}.
The statechart model consists of six statecharts: $\mathtt{Treatment}$,
$\mathtt{Ventilator}$, $\mathtt{EPIpump}$, $\mathtt{SodiumBicarbonatePump}$,
$\mathtt{IVpump}$, and $\mathtt{LasixPump}$.
The statecharts communicate using events and shared variables.
The $\mathtt{Treatment}$ statechart implements the simplified
cardiac arrest treatment procedure.
The other statecharts implement treatment actions, such as medicine injection.
\begin{figure*}[ht]
	\centering
	\includegraphics[width =0.99\textwidth]{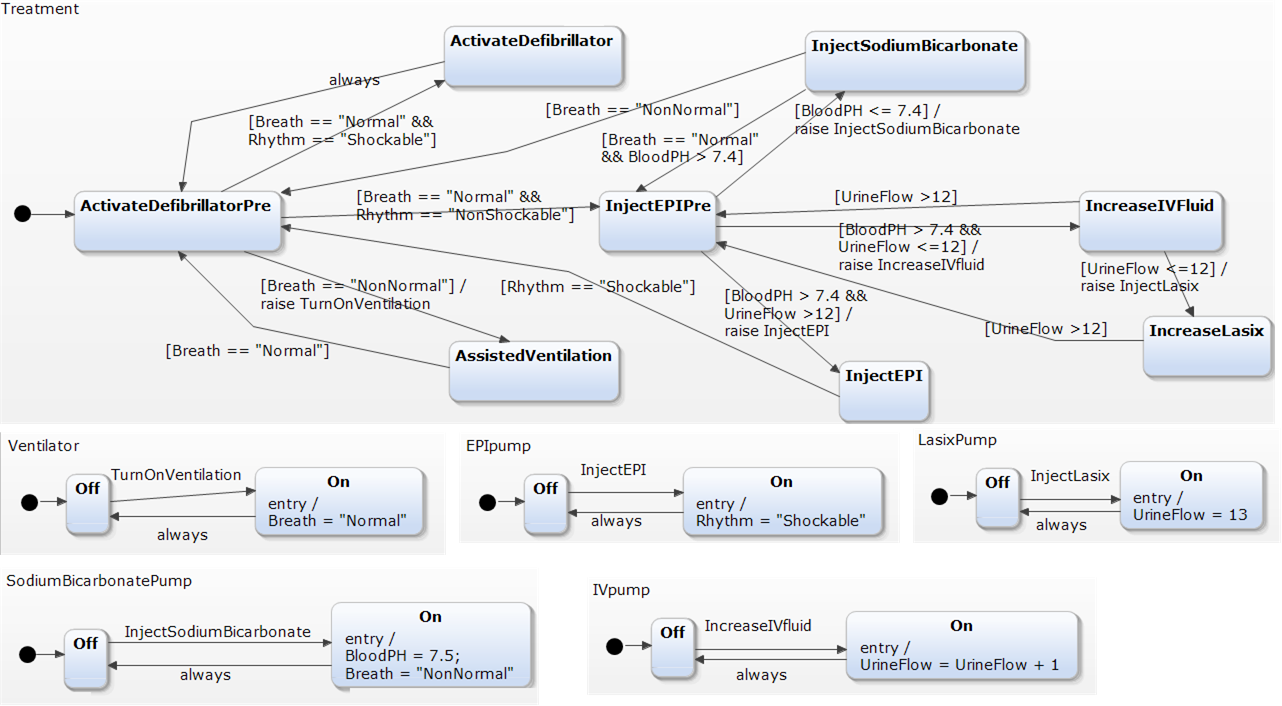}
	\caption{Simplified Cardiac Arrest Treatment Statechart Model}
	\label{fig:cardiacY}
\end{figure*}

We use the \toolname\footnote{The \toolname\ tool is available in \toolweb.} tool to transform the simplified cardiac arrest
treatment statechart model given in Fig.~\ref{fig:cardiacY}
to timed automata to formally verify safety properties \textbf{P1} and \textbf{P2}.
The transformed simplified cardiac arrest treatment timed automata
model is depicted in Fig.~\ref{fig:cardiacU}.
\begin{figure*}[ht]
	\centering
	\includegraphics[width = 0.99\textwidth]{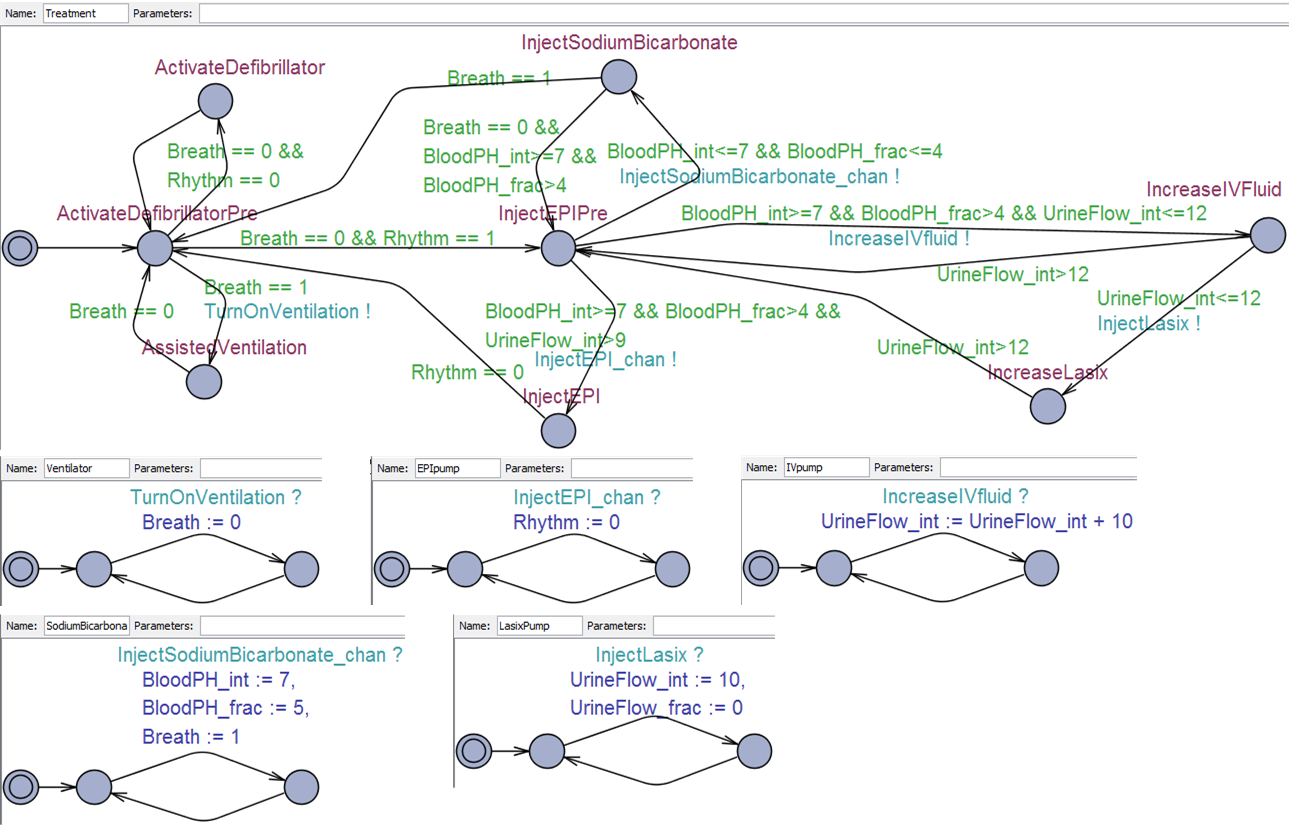}
	\caption{Simplified Cardiac Arrest Treatment Timed Automata Model}
	\label{fig:cardiacU}
\end{figure*}

The safety properties \textbf{P1} and \textbf{P2} are specified by CTL (computation tree logic)
formula~\eqref{eq:P1} and formula~\eqref{eq:P2}, respectively.
\begin{align}
\label{eq:P1}
\begin{split}
&A[~] \ \mathtt{Treatment.ActivateDefibrillaotr} \ imply \\
&\qquad \mathtt{Breath} == 0 \ \&\& \ \mathtt{Rhythm} == 0
\end{split}
\end{align}
\begin{align}
\label{eq:P2}
\begin{split}
&A[~] \ \mathtt{Treatment.InjectEPI} \ imply \\
&\qquad (\mathtt{BloodPH_{int}}>7 \ || \ (\mathtt{BloodPH_{int}}==7) \ \&\& \\
&\qquad \mathtt{BloodPH_{frac}}>4) \ \&\& \ (\mathtt{UrineFlow_{int}}>12 \ || \\
&\qquad (\mathtt{UrineFlow_{int}}==12 \ \&\& \ \mathtt{UrineFlow_{frac}}>0) )
\end{split}
\end{align}
Note that \uppaal\ timed automata do not support real numbers and strings.
For a string, we represent it by an integer variable with a dictionary
which maps integer values to string values. For instance, in formula~\eqref{eq:P1},
$\mathtt{Breath} == 0$ and $\mathtt{Rhythm} == 0$ indicate
the patient breathing is normal and the ECG rhythm is shockable, respectively.
To represent a real number in \uppaal\ timed automata, we use two integers
to represent its integer part and fraction part, respectively.
For example, in formula~\eqref{eq:P2},
$\mathtt{BloodPH_{int}}$ and $\mathtt{BloodPH_{frac}}$ represent
the integer and fraction parts of the patient's blood pH value, respectively.

We first run simulations on the simplified cardiac arrest treatment statecharts
through \yakindu\ to validate safety properties \textbf{P1} and \textbf{P2}.
The simulation results through \yakindu\ show that both safety properties
\textbf{P1} and \textbf{P2} are satisfied.
We then verify the two safety properties represented by formula~\eqref{eq:P1}
and formula~\eqref{eq:P2} in \uppaal.
The verification results also show that both \textbf{P1} and \textbf{P2} hold.

To validate the proposed approach,
we inject an error into the simplified cardiac arrest treatment statechart model
shown in Fig.~\ref{fig:cardiacY} as follows:
change the guard of transition, which transits from
state $\mathtt{InjectEPIPre}$ to state $\mathtt{InjectEPI}$,
from $\mathtt{BloodPH} > 7.4 \ \&\& \ \mathtt{UrineFlow} >12$
to $\mathtt{BloodPH} > 7.4 \ \&\& \ \mathtt{UrineFlow} >10$.
The injected error should fail the safety property \textbf{P2}.
We re-transform the statechart model with the injected error to
timed automata and verify the safety properties \textbf{P1} and \textbf{P2}
through \uppaal. The verification results show that the safety property
\textbf{P1} still holds, while the safety property \textbf{P2} fails.

%\subsection{Scalability Analysis}
%
%According to the transformation Rules~\ref{rule:initial}-\ref{rule:synchrony}
%defined in Section~\ref{subsec:y2u-rule}, 
%the time complexity of model transformation is $O(N)$, where $N$ is
%the sum of states' number and transitions' number.
%Regarding model verification, the complexity depends
%on the verification algorithm applied to the timed automata.
%In \uppaal, the overall time complexity for model checking
%is linear to the sum of logical connectives number and temporal operators
%number in a property and is polynomial to the number of locations~\cite{xx}.
%
%Based on the complexity analysis, we can draw the conclusion
%that the proposed approach is scalable in verifying complicated medical
%best practice guideline models. It is worth noting that formal verification
%tools unavoidably suffers from the state explosion problem, so as \uppaal.
%But addressing the state explosion problem is not the focus of the paper.

\section{Related Work}
\label{sec:related}
Medical best practice guidelines play an important role in today's
medical care. There exist many efforts to develop various computer-interpretable
models and tools for the management of guidelines,
such as GLIF~\cite{patel1998representing}, Asbru~\cite{Balser2002Asbru},
EON~\cite{Tu2001EON}, GLARE~\cite{Terenziani2004GLARE},
and PROforma~\cite{Fox1998PROforma}, which are mainly aimed to provide guided support
to doctors. The exist medical guideline modeling approaches can improve effectiveness
of clinical validation. However for safety-critical medical guideline systems,
validation by medical staff alone is not adequate for guaranteeing
safety, formal verifications are needed. The formal model based approaches~\cite{Clarke1999ModelCheckingBook,Loveland1978TheoremProvingBook,Duftschmid2001KVC}
are appealing because they provide a unified basis for formal analysis to achieve the
expected level of safety guarantees.
Unfortunately, most existing medical guideline models,
such as Asbru~\cite{Balser2002Asbru} and GLARE~\cite{Terenziani2004GLARE},
do not provide formal verification capability.
A common approach is to transform
an existing medical guideline model to a formal model to verify
the safety properties, such as transforming Asbru model to KIV model~\cite{Hommersom2007TKDE}
and transforming GLARE model to PROMELA model~\cite{Giordan2006}.

To bridge the gap between state-oriented models and formal verification,
efforts are also made from research community to transform
state-oriented modeling specifications/languages, 
such as UML (unified modeling language) statecharts~\cite{nobakht2013approach,Zorin2012Translation},
hierarchical timed automata (HTA)~\cite{David2002FASE}, discrete
event system specification for real-time (RT-DEVS)~\cite{furfaro2009development},
parallel object-oriented specification language (POOSL)~\cite{Xing2010ACSD},
and Stateflow models~\cite{Jiang2016RTAS,Jiang2017TCPS}
to \uppaal\ timed automata.
On the other hand, Pajic \etal\ developed a tool to transform
\uppaal\ timed automata to Stateflow models for implementation issues~\cite{Pajic2012RTAS,Pajic2014TECS}.

\yakindu\ statecharts are similar to UML statecharts, but
have major semantic differences~\cite{yakinduDoc}. For example,
the execution semantics of \yakindu\ statecharts is cycle driven,
while UML statecharts are event driven. Therefore,
existing tools cannot be directly applied to transform
\yakindu\ statecharts to \uppaal\ timed automata. 
Moreover, most existing work do not provide formal definitions for
transformation nor formally prove the correctness of transformation.
The work presented in this paper overcomes the limitations rendered
by statecharts, provide formalized transformation rules, and
prove that the transformation rules maintain model equivalence.

\section{Conclusion}
\label{sec:conclusion}
The existing medical best practice guidelines in hospital handbooks
are often lengthy and difficult for medical staff to 
remember and apply clinically.
Our previous work~\cite{Guo2016ICCPS} designed and implemented a framework
to support developing verifiably safe medical guideline models.
The framework models medical guidelines with statecharts and
transform statecharts to timed automata to formally verify
safety properties.
However, some components in the framework do not have formal
definitions, which will result in unavoidable ambiguity.
The paper presents the formalism of the framework in four folds:
(1) define the formal execution semantics of \yakindu\ statecharts;
(2) formalize the transformation rules from statecharts to
timed automata;
(3) give the formal definition of execution semantics equivalence
between statecharts and transformed timed automata; and
(4) formally prove that the transformation rules maintain the execution semantic
equivalence between statecharts and transformed timed automata.
We take \yakindu\ and \uppaal\ as examples to implement the framework.
The methodology can be applied to other modeling languages/tools and
other safety-critical domains.

\begin{acks}
	This work is supported in part by NSF CNS 1842710,
NSF CNS 1545008, and NSF CNS 1545002.
\end{acks}

\bibliographystyle{ACM-Reference-Format}
\bibliography{ref}

\end{document}